\documentclass[12pt]{article}
\usepackage{longtable,supertabular}
\usepackage{amsmath}
\usepackage{amssymb}
\usepackage{latexsym}
\usepackage{a4wide,color}
\usepackage{multirow}
\usepackage{ntheorem}
\usepackage[square,numbers,sort]{natbib}
\usepackage[title]{appendix}
\usepackage{array}
\usepackage{makecell}
\usepackage{enumitem}
\usepackage{booktabs}

\usepackage{xcolor}
\usepackage[colorlinks=true,linkcolor=blue,citecolor=blue,urlcolor=blue,bookmarks=false]{hyperref}

\usepackage{float}

\newtheorem{definition}{Definition}
\newtheorem{theorem}{Theorem}
\newtheorem{corollary}{Corollary}
\newtheorem{proposition}{Proposition}
\newtheorem{lemma}{Lemma}

\newtheorem*{proof}{Proof.}

\newtheorem{example}{Example}
\newtheorem{remark}{Remark}

\newcommand{\F}{\mathbb{F}}

\newcommand{\ord}{\mathrm{ord}}

\newcommand{\bC}{{\mathbf{{C}}}}

\newcommand{\bc}{{\mathbf{c}}}

\newcommand{\bV}{{\mathbf{V}}}

\newcommand{\bD}{{\mathbf{D}}}

\newcommand{\qed}{\hfill $\square$}

\title{\bf 
	On the Minimum Distances of Some Families of Goppa Codes and BCH Codes
	}
\author{
	Yaqi Chen, Hao Chen, Cunsheng Ding and Huimin Lao
	\thanks{ 		
		The research of Hao Chen was supported by NSFC Grant 62032009. The research of C. Ding was supported by Hong Kong Research Grants Council under Grant No. 16301123. 
		The research of Huimin Lao was supported by the National Research Foundation, Singapore and Infocomm Media Development Authority under its Trust Tech Funding Initiative. Any opinions, findings and conclusions or recommendations expressed in this material are those of the author(s) and do not reflect the views of
		National Research Foundation, Singapore and Infocomm Media Development Authority.
		(Corresponding author: Hao Chen)
		Yaqi Chen is with the College of Cyber Security, Jinan University, Guangzhou, Guangdong Province, 510632, China. (e-mail: chenyq@stu.jnu.edu.cn)
		Hao Chen is with the College of Information Science and
		Technology, Jinan University, Guangzhou, Guangdong Province, 510632, China. (e-mail: haochen@jnu.edu.cn)
		Cunsheng Ding is with the Department of Computer Science and Engineering, The Hong Kong University of Science and Technology, Hong Kong, China. (e-mail: cding@ust.hk)
		Huimin Lao is with Strategic Centre for Research in Privacy-Preserving Technologies and Systems, Nanyang Technological University, Singapore. (e-mail: huimin.lao@ntu.edu.sg)  
		}
}

\begin{document}
	
	\maketitle
	\begin{abstract}      
        Goppa codes form an important class of alternant codes with wide applications in algebraic coding theory and code-based cryptography. Determining the true minimum distance of a Goppa code is a difficult problem. In this paper, we provide a necessary and sufficient criterion for a Goppa code to attain its designed distance $\delta=t+1$, where $t$ is the degree of the Goppa polynomial. As applications, we determine the minimum distances of several classes of $q$-ary Goppa codes. In particular, we prove the tightness of the improved lower bound for a class of wild Goppa codes, and extend the family with $G(x)=x^t+A$ from the binary case to arbitrary odd prime powers.
        
        We then specialize the criterion to the monomial case $G(x)=x^t$, which is equivalent to primitive BCH codes.
        This leads to several infinite families of primitive BCH codes with $d=\delta$, including the binary codes $\bC_{(2,2^m-1,9,1)}$ and $\bC_{(2,2^m-1,15,1)}$, the family $\bC_{(p,p^p-1,2p+2,1)}$ with an odd prime $p$ and the family $\bC_{(q,q^m-1,r\frac{q^m-1}{q-1}+1,1)}$ with $r\mid q-1$. In particular, we prove that the primitive BCH code
        $\bC_{(q,q^m-1,q^t+1,1)}$ has minimum distance $q^t+1$ under the condition $t\mid m$, improving the previously known condition
        $pt\mid m$.\\

		{\bf Index terms:} Goppa codes, BCH codes,  minimum distance.
	\end{abstract}
	
	\newpage
 
\section{Introduction}

\subsection{Background}
Goppa codes, introduced by Goppa in 1970 \cite{1970Goppa}, form a fundamental class of alternant codes and play a central role in both algebraic coding theory and code-based cryptography, including the McEliece cryptosystem \cite{1978McEliece}.

Throughout this paper, let $\F_q$ denote the finite field with $q$ elements and $\F_q^\ast$ its multiplicative group. 
For a vector $\mathbf{a}=(a_0,\dots,a_{n-1})\in\F_q^n$, define 
$\mathrm{supp}(\mathbf{a})=\{i:a_i\ne 0\}$ and 
$\mathrm{wt}(\mathbf{a})=|\mathrm{supp}(\mathbf{a})|$. 
The Hamming distance is $d(\mathbf{a},\mathbf{b})=\mathrm{wt}(\mathbf{a}-\mathbf{b})$, 
and the minimum distance of a code $\mathbf{C}\subseteq\F_q^n$ is
$d(\mathbf{C})=\min_{\mathbf{a}\ne\mathbf{b}} d(\mathbf{a},\mathbf{b})$. A $q$-ary linear code $\mathbf{C}$ with parameters $[n,k,d]_q$ is a $k$-dimensional linear subspace of $\F_q^n$ with minimum distance $d$.

Let $q$ be a prime power and let $\F_{q^m}$ be an extension of $\F_q$.
Let $L=\{\alpha_1,\dots,\alpha_n\}\subseteq \F_{q^m}$ be a set of distinct elements, and let $G(x)\in\F_{q^m}[x]$ be a polynomial of degree $t$ such that
$G(\alpha_i)\ne 0$ for all $1\le i\le n$.

\begin{definition}[Goppa codes]\label{D-goppa}\cite{1970Goppa}
	The $q$-ary Goppa code associated with $L$ and $G$ is defined as
	\begin{equation}\label{e-goppa}
		\Gamma_q(L, G) := \left\{ \mathbf{c} = (c_1, c_2, \dots, c_n) \in \mathbb{F}_q^n : \sum_{i=1}^n \frac{c_i}{x - \alpha_i} \equiv 0 \pmod{G(x)} \right\}.
	\end{equation}
	The set $L$ is called the support set, and $G(x)$ is called the Goppa polynomial. The length of $\Gamma_q(L, G)$ is $n=|L|$.
\end{definition}

\begin{lemma}[Goppa bound]\label{L-goppa}\cite{1970Goppa}
	The Goppa code $\Gamma_q(L,G)$ has length $n$, dimension $k\ge n-mt,$
	and minimum distance $d\ge t+1$.
	The value $\delta=t+1$ is called its designed distance.
	For squarefree binary Goppa polynomials, the classical lower bound for binary Goppa codes improves to $d\ge 2t+1$, and the corresponding designed distance is $\delta=2t+1$.
\end{lemma}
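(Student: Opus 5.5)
We prove the three assertions separately: length, dimension, and minimum distance (including the binary improvement).

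\textbf{Length and dimension.} The length is $n=|L|$ by definition. Since $G(\alpha_i)\ne 0$, the linear polynomial $x-\alpha_i$ is invertible modulo $G(x)$. Write $(x-\alpha_i)^{-1}\equiv -\frac{G(x)-G(\alpha_i)}{x-\alpha_i}\,G(\alpha_i)^{-1} \pmod{G(x)}$; this is a polynomial of degree at most $t-1$. The defining congruence then says that $\sum_i c_i h_i(x)=0$, where each $h_i$ is this polynomial of degree $<t$. Reading off the $t$ coefficients gives $t$ linear equations over $\F_{q^m}$. Expanding each equation in an $\F_q$-basis of $\F_{q^m}$ gives at most $mt$ linear equations over $\F_q$ on $\mathbf{c}\in\F_q^n$, so $k\ge n-mt$.

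\textbf{Bound $d\ge t+1$.} Take a nonzero codeword $\mathbf{c}$ with support $S$ and weight $w=|S|$. Put $\sigma(x)=\prod_{i\in S}(x-\alpha_i)$. Then
$$\sum_{i\in S}\frac{c_i}{x-\alpha_i}=\frac{\omega(x)}{\sigma(x)},\qquad \omega(x)=\sum_{i\in S}c_i\prod_{j\in S\setminus\{i\}}(x-\alpha_j).$$
Since $\sigma$ is coprime to $G$, the congruence gives $G(x)\mid\omega(x)$. Evaluating at $\alpha_i$ gives $\omega(\alpha_i)=c_i\prod_{j\ne i}(\alpha_i-\alpha_j)\ne0$, so $\omega\ne 0$. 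Also $\deg\omega\le w-1$. Hence $t\le w-1$, that is, $w\ge t+1$.

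\textbf{Binary squarefree case.} Here $c_i=1$ on $S$, so $\omega=\sigma'$, the formal derivative of $\sigma$. Over characteristic $2$, $\sigma'$ contains only even powers of $x$, so $\sigma'=u(x)^2$ for some polynomial $u$; this uses that the Frobenius map is bijective on $\F_{2^m}$. Every root of $G$ is a simple root dividing $u^2$, hence divides $u$. Since $G$ is squarefree, this gives $G\mid u$, and therefore $G^2\mid\sigma'$. As $\sigma'\ne0$ and $\deg\sigma'\le w-1$, we get $2t\le w-1$, i.e.\ $w\ge 2t+1$.

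The only delicate step is this last one: passing from $G\mid\sigma'$ to $G^2\mid\sigma'$ via the square structure of derivatives in characteristic $2$. This is exactly where squarefreeness of $G$ is needed.
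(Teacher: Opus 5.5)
Your proof is correct. The paper gives no proof of this lemma and only cites Goppa's original work, so there is no internal argument to match. What you give is the classical textbook argument.

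The paper does implicitly rely on a different description of the code, in the proof of Theorem~\ref{T-0}. There the defining congruence is replaced by the system \eqref{e-T-0-2}, i.e.\ by the parity-check matrix $\bV\bD$ (Vandermonde times the diagonal matrix of the $G(\alpha_i)^{-1}$). This system is obtained from your polynomials $h_i$ by an invertible change of basis depending only on the coefficients of $G$. From that description both parts follow at once:
\begin{itemize}
\item $k\ge n-mt$, because there are $t$ rows over $\F_{q^m}$, hence at most $mt$ rows over $\F_q$;
\item $d\ge t+1$, because any $t$ columns of $\bV\bD$ are independent over $\F_{q^m}$, a fortiori over $\F_q$.
\end{itemize}

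Your $\omega/\sigma$ route is equally short for these two parts. Its advantage is that it handles the binary case in the same framework. The identity $\omega=\sigma'$, together with $\sigma'=u^2$ and squarefreeness, shows that every codeword of $\Gamma_2(L,G)$ satisfies the defining congruence modulo $G^2$. In other words $\Gamma_2(L,G)=\Gamma_2(L,G^2)$, the reverse inclusion being trivial. The Vandermonde description alone does not reveal this.

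One cosmetic point concerns the step from $G\mid u^2$ to $G\mid u$. It is cleaner to argue with the irreducible factors of $G$ than with its roots, which may lie outside $\F_{2^m}$. Each irreducible factor divides $u^2$, hence divides $u$, and these factors are pairwise distinct, so their product $G$ divides $u$. Over the perfect field $\F_{2^m}$ the two formulations are equivalent, so nothing is lost.
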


It is well known that there exist asymptotic families of Goppa codes reaching the Gilbert–Varshamov bound.
Although the Goppa bound provides a universal lower bound on the minimum distance,
determining the exact true minimum distance remains notoriously difficult for general Goppa codes.
Consequently, an important problem in this field is to characterize and construct infinite families of Goppa codes with \(d=\delta\).
Such families are valuable because their error-correcting performance is precisely known and guaranteed.

BCH codes are closely related to Goppa codes, and form an important subclass of cyclic codes \cite{1960BCH}. Both classes can be viewed as subfield subcodes of generalized Reed–Solomon codes.
Let $\bC_{(q,n,\delta,b)}$ denote the BCH code over $\mathbb F_q$ with length $n$, designed distance $\delta$, and offset $b$. The case $b=1$ corresponds to the narrow-sense BCH code, while $n=q^m-1$ corresponds to the primitive BCH code.
In particular, when $n=q^m-1$, $L=\{1,\alpha,\dots,\alpha^{n-1}\}$, and $G(x)=x^{\delta-1}$, the Goppa code $\Gamma_q(L,G)$ is equivalent to the primitive BCH code $\bC_{(q,q^m-1,\delta,1)}$.

It is known that $\bC_{(q,n,\delta_1,b)}=\bC_{(q,n,\delta_2,b)}$ may hold for distinct $\delta_1$ and $\delta_2$. 
The largest designed distance defining a BCH code is called the Bose distance $d_B$ \cite[p.~171]{HP}. Thus
$d\ge d_B\ge \delta,$
which gives an approximation to the true minimum distance.

\subsection{Related works and motivations}

Early work on the minimum distances of Goppa codes focused primarily on binary cases and on highly structured defining polynomials. 
In \cite{1995BS}, Bezzateev and Shekhunova presented a subclass of binary Goppa codes with minimum distance equal to the designed distance $2t+1$ when $G(x)=x^t+A$, $t\mid 2^m-1$ and $A$ is a $t$-th power in $\F_{2^m}^\ast$. 
In \cite{2008BS}, they investigated a chain of separable binary Goppa codes with determined minimum distance defined by
$G(x)=x^{t-1}+1, x^t+x, x^t+x+1, x^t+x^{t-1}+1, x^{t+1}+1,$
where $t=2^\ell$ and $L \subseteq \mathbb{F}_{q^{2\ell}}$. 
In \cite{2010BS}, this chain structure was extended to $q$-ary separable Goppa codes. 
More recently, \cite{2019BS} introduced totally decomposed cumulative Goppa codes with $G(x)=(x^{t+1}+1)^j,$ where $t=q^\ell$ and $1\le j\le q$, and determined their exact minimum distances.

In \cite{2014wild}, Couvreur et al. investigated wild Goppa codes defined by norm polynomials. 
For a polynomial $g(x)\in \F_{q^m}[x]$ of degree $r$ with no roots in $\F_{q^m}$, they proved the identity
$$
\Gamma_q(L,g^{q^{m-1}+\cdots+q})
=\Gamma_q(L,g^{q^{m-1}+\cdots+q+1})
$$
for every support $L\subseteq \F_{q^m}$. This equivalence yields improved designed parameters for the corresponding code and guarantees a lower bound
$d\ge r(q^{m-1}+\cdots+q+1)+1.$
However, whether this improved lower bound is tight has remained open in general.

In parallel, determining the exact minimum distance of BCH codes is also a difficult and longstanding problem.
Related bounds for cyclic codes include Weil--Serre type bounds; see, for example, \cite{GuneriOzbudak2008}. 
Only a limited number of infinite families with $d=\delta$ are known. 
For a recent summary, we refer to \cite{2026BCH}. 
A particularly notable example is the primitive BCH code $\bC_{(q,q^m-1,q^t+1,1)}.$
It was conjectured in \cite[Conjecture~2]{2015Ding} that the minimum distance of this code always equals its Bose distance
$$
d_B=\left\lfloor \frac{q^m-2}{q^{m-t}-1}\right\rfloor+1.
$$
When $t\le m/2$, this reduces to $d_B=q^t+1$. In \cite{2026BCH}, a criterion was developed for BCH codes with $d=\delta$, and it was shown that
$\bC_{(q,q^m-1,q^t+1,1)}$ has minimum distance $q^t+1$ under the condition
$m\equiv 0\pmod{pt},$
where $p$ is the characteristic of $\F_q$.
This naturally raises the question of whether the Goppa viewpoint can provide a more structural explanation of the criterion in \cite[Theorem 3.1]{2026BCH} and identify more families of BCH codes with $d=\delta$.

The main motivations of this paper are the following.
\begin{itemize}
	\item Determining the exact minimum distance of a Goppa code is difficult in general, and most known results rely on restrictive assumptions e.g., $t=q^\ell$ and $L\subseteq \mathbb F_{q^{2\ell}}$.
	\item For wild Goppa codes, the tightness of the lower bound on the minimum distance has not been settled.
	\item The close connection between Goppa codes and BCH codes suggests that new results on the minimum distance of BCH codes may be obtained from the Goppa viewpoint.
\end{itemize}

The aim of this paper is to address these problems and determine the exact minimum distances of several families of $q$-ary Goppa codes and primitive BCH codes from a unified viewpoint.

\subsection{Contributions and organization}
The main contributions of this paper are summarized as follows.
\begin{itemize}
	\item We obtain a necessary and sufficient criterion for a $q$-ary Goppa code to attain its designed distance, see Theorem~\ref{T-0}. In the case $G(x)=x^t$ and $L=\F_{q^m}^*$, this criterion recovers the criterion obtained in \cite[Theorem 3.1]{2026BCH} for BCH codes.

	\item Three families of Goppa codes with $d=\delta$ are presented. In particular, we prove the tightness of the improved bound for wild Goppa codes when $r\mid q-1$ and $r>1$, and extend the Goppa codes with $G(x)=x^t+A$ in \cite{1995BS} from binary cases to arbitrary odd prime powers. 
	
	\item We derive four families of narrow-sense primitive BCH codes with $d=\delta$, including $\bC_{(2,2^m-1,\delta,1)}$ with $\delta \in \{9,15\}$, 
	$\bC_{(p,p^p-1,2p+2,1)}$ where $p$ is an odd prime,
	$\bC_{(q, q^m-1, t+1, 1)}$ with $r \mid q-1$ and $t=r\frac{q^m-1}{q-1}$.
	In particular, we show that $\bC_{(q,q^m-1,q^t+1,1)}$ has minimum distance $q^t+1$ when $t\mid m$, improving the requirement $pt \mid m$ in \cite[Theorem 4.4]{2026BCH}.

\end{itemize}

The code families with $d=\delta$ studied in this paper are summarized in Table~\ref{tab:main-results}. 
For each Goppa family therein, the support set is
$L=\{\alpha\in\F_{q^m}:G(\alpha)\neq 0\}.$

\begin{table}[H]
	\centering
	\caption{Code classes studied in this paper}
	\label{tab:main-results}
	\renewcommand{\arraystretch}{1.15}
	\small
	\begin{tabular}{|p{1cm}|p{3.2cm}|p{5cm}|p{3.2cm}|p{1.6cm}|}
		\hline
		\centering Type & \centering $G(x)$ / code class & \centering Main assumptions & \centering Minimum distance & \centering Reference \tabularnewline
		\hline
		\multirow{7}{*}{\makecell{Goppa\\codes}}
		& $N_{\F_{q^m}/\F_q}(g(x))$
		& $\deg(g)=r>1$, $r\mid q-1$, $g$ has no roots in $\F_{q^m}$
		& $d=r\frac{q^m-1}{q-1}+1$
		& Thm.~\ref{T-4-2} \\
		\cline{2-5}
		& $x^t+A$
		& $q$ is odd, $t\mid q^m-1$, $A$ is a $t$-th power in $\F_{q^m}^\ast$
		& $d=t+1$
		& Thm.~\ref{T-4-3} \\
		\cline{2-5}
		& $(x+u)^t-\lambda(x+v)^t$
		& $t+1\mid(q^m-1)$, $p\nmid t+1$, $u\neq v$, $\lambda\in\F_{q^m}^\ast\setminus\{1\}$ is a $(t+1)$-th power
		& $d=t+1$
		& Thm.~\ref{T-4-4} \\
		\hline
		\multirow{5}{*}{\makecell{BCH\\codes}}
		& $\bC_{(2,2^m-1,\delta,1)}$
		& $\delta=9$ with $8\mid m$; $\delta=15$ with $14\mid m$
		& $d=\delta$
		& Thm.~\ref{T-q=2} \\
		\cline{2-5}
		& $\bC_{(p,p^p-1,2p+2,1)}$
		& $p$ is an odd prime
		& $d=2p+2$
		& Thm.~\ref{T-5-1} \\
		\cline{2-5}
		& $\bC_{(q,q^m-1,r\frac{q^m-1}{q-1}+1,1)}$
		& $r\mid q-1$, $1\le r<q-1$
		& $d=r\frac{q^m-1}{q-1}+1$
		& Thm.~\ref{T-BCH} \\
		\cline{2-5}
		& $\bC_{(q,q^m-1,q^t+1,1)}$
		& $t\mid m$, $t<m$
		& $d=q^t+1$
		& Thm.~\ref{T-q^t+1} \\
		\hline
	\end{tabular}
\end{table}

The rest of this paper is organized as follows. 
Section~\ref{s2} recalls the basic definitions and preliminaries on Goppa codes and BCH codes. 
Section~\ref{s3} proves a criterion for Goppa codes to have $d=\delta$, derives constructive consequences, and specializes the criterion to primitive BCH codes. 
Section~\ref{s4} applies this criterion to determine the minimum distance of three classes of Goppa codes. 
Section~\ref{s5} derives four explicit families of narrow-sense primitive BCH codes with $d=\delta$. 
Section~\ref{s6} concludes the paper.
\section{Preliminaries}\label{s2}

In this section, we recall the definitions and some known results of Goppa codes and BCH codes.

The wild Goppa codes studied in \cite{2014wild} are defined as follows.

\begin{definition}\cite{2014wild}\label{D-wild}
	Let $g(x) \in \mathbb{F}_{q^m}[x]$ be a polynomial of degree $r$. The $q$-ary wild Goppa code associated with the norm of $g(x)$ is defined by the Goppa polynomial$$G(x) = \text{N}_{\mathbb{F}_{q^m}/\mathbb{F}_q}(g(x)) = g(x)^{\frac{q^m-1}{q-1}},$$where $L \subseteq \mathbb{F}_{q^m}$ is chosen such that $G(\alpha) \neq 0$ for all $\alpha \in L$.
\end{definition}

We next recall the definition of narrow-sense BCH codes.

\begin{definition}
	Let $q$ be a prime power and let $n$ be a positive integer such that $\gcd(n,q)=1$.  Let $m=\ord_n(q)$, and let $\alpha$ be a primitive $n$-th root of unity in $\mathbb{F}_{q^m}$. 
	For an integer $\delta\ge 2$, the narrow-sense BCH code $\bC_{(q,n,\delta,1)}$ is the cyclic code of length $n$ over $\mathbb{F}_q$ with generator polynomial
	$$
	g(x)=\mathrm{lcm}\bigl(m_1(x),m_2(x),\dots,m_{\delta-1}(x)\bigr),
	$$
	where $m_i(x)$ is the minimal polynomial of $\alpha^i$ over $\mathbb{F}_q$. Equivalently, $g(x)$ is the monic polynomial of least degree over $\F_q$ having
	$\alpha,\alpha^2,\dots,\alpha^{\delta-1}$
	as zeros. The integer $\delta$ is called the designed distance.
\end{definition}

The following two lemmas give the dimensions of certain BCH codes.

\begin{lemma}\label{L-dim1}\cite{AA2007}
	Suppose $q^{\lceil m/2 \rceil} < n \le q^m - 1$, where $m=\ord_n(q)$. The narrow-sense BCH code 
	$\bC_{(q,n,\delta,1)}$ with $\delta$ in the range $2 \le \delta \le \min \{\lfloor  n q^{\lceil m/2 \rceil}/(q^m - 1) \rfloor , n  \}$ has dimension
	$$
	k=n-m\lceil (\delta-1)(1-1/q)  \rceil.
	$$
\end{lemma}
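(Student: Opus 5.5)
The count will come from the cyclotomic coset structure of the defining set. Put $n=q^m-1$ and $s=\lceil m/2\rceil$, and let $\alpha$ be a primitive element of $\F_{q^m}$. The defining set of $\bC_{(q,n,\delta,1)}$ is $T=\bigcup_{i=1}^{\delta-1}C_i$, where $C_i$ is the $q$-cyclotomic coset of $i$ modulo $n$, and $k=n-|T|$. So I will show that $|T|=m\lceil(\delta-1)(1-1/q)\rceil$ by proving three things:
\begin{enumerate}[label=(\alph*)]
\item every $i$ with $1\le i\le \delta-1$ and $q\nmid i$ is a coset leader;
\item two such integers $i\ne j$ lie in different cosets;
\item every such coset has size exactly $m$.
\end{enumerate}
Integers divisible by $q$ contribute nothing new, since $C_{i}=C_{i/q}$ and $i/q<i$. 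The number of $i\in[1,\delta-1]$ with $q\nmid i$ is $(\delta-1)-\lfloor(\delta-1)/q\rfloor=\lceil(\delta-1)(1-1/q)\rceil$, which gives the formula. I will work in the primitive case $n=q^m-1$. For general $n$ with $q^{\lceil m/2\rceil}<n$, I will scale: the integer $i$ modulo $n$ corresponds to $i\cdot(q^m-1)/n$ modulo $q^m-1$, and the hypothesis $\delta\le \lfloor nq^{s}/(q^m-1)\rfloor$ says precisely that the scaled exponents stay below $q^{s}$. So the primitive argument carries over.

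For (a) and (b), write $i$ in base $q$ as an $m$-digit string. Since $i<q^{s}$ with $s=\lceil m/2\rceil$, the top $\lfloor m/2\rfloor$ digits vanish, and $q\nmid i$ means the lowest digit is nonzero. Multiplying by $q^j$ modulo $q^m-1$ cyclically shifts the digit string. Suppose some shift $iq^j \bmod n$, with $0<j<m$, is $\le \delta-1<q^{s}$. Then that shift also has its top $\lfloor m/2\rfloor$ digits zero. Combining this with the zero block of $i$, I must show that the nonzero digits of $i$, which lie in a window of length $s$ ending at position $0$, cannot be rotated into another window of length $s$ ending at position $0$ unless $j=0$. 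This needs more than a length count: if it held for a nontrivial $j$, the rotated lowest nonzero digit would land at a position $j\ge 1$, and I then need to compare values to show the rotation is strictly larger than $i$. It is larger when the new position of the top nonzero digit exceeds the old one. I expect this to be the main obstacle: it requires careful bookkeeping with the $m$ odd versus even cases, together with the bound $\delta-1<q^s$ rather than merely $i<q^s$, to exclude the equality cases that produce rotations of equal value.

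For (c), if $|C_i|=\ell<m$ then $\ell\mid m$, and $i(q^\ell-1)\equiv0\pmod{q^m-1}$. In that case the digit string of $i$ is periodic with period $\ell\le m/2$. Its nonzero lowest digit then forces a nonzero digit at position $m-\ell\ge m/2$, which contradicts $i<q^{s}$ except in the edge case where $m$ is even and $\ell=m/2$. I will handle that edge case separately: $i$ would then be a multiple of $(q^m-1)/(q^{m/2}-1)=q^{m/2}+1>q^{s}$, which is impossible.

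Finally, the condition $\delta\le n$ keeps all $i$ in $[1,n-1]$, so nothing wraps around to the $0$ coset. Assembling (a), (b) and (c) gives the stated dimension.
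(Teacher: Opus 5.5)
The paper does not prove this lemma; it imports it from \cite{AA2007}. Your plan is correct and is essentially the argument given there: scale $i\mapsto i(q^m-1)/n$ to the primitive case, show that the cosets $C_i$ with $1\le i\le\delta-1$ and $q\nmid i$ have full size $m$ and are pairwise distinct, then count the $(\delta-1)-\lfloor(\delta-1)/q\rfloor$ such $i$.

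The step you flag as the main obstacle is in fact immediate and needs no odd/even case split. Suppose $0<j<m$ and the rotation of $i$ by $j$ is below $q^s$. The nonzero unit digit of $i$ lands at position $j$, which forces $j\le s-1$. The digits of $i$, all in positions $\le s-1$, then move to positions $\le 2s-2\le m-1$ without wrapping. So the rotation equals $iq^j>i$ (and is even divisible by $q$). Likewise, in (c) there is no genuine edge case, since $m-\ell\ge s$ whenever $\ell\mid m$ and $\ell<m$.
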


\begin{lemma}[Theorem 4.10. \cite{CJL2020}]\label{L-dim2}
	Let $t=r\frac{q^m-1}{q-1}$
	for some integer $r$ with $1\le r<q-1$. Then the narrow-sense primitive BCH code $\bC_{(q,q^m-1,t+1,1)}$
	has dimension $k=(q-r)^m-1.$
\end{lemma}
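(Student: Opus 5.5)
We will compute the dimension through the defining set of the cyclic code. Write $n=q^m-1$ and let $Z=\bigcup_{i=1}^{t}C_i$, where $C_i=\{iq^j \bmod n: j\ge 0\}$ is the $q$-cyclotomic coset of $i$ modulo $n$. The generator polynomial of $\bC_{(q,n,t+1,1)}$ has exactly the zeros $\alpha^z$ with $z\in Z$. Hence $k=n-|Z|$, which equals the number of residues $i\in\{0,1,\dots,n-1\}$ whose coset $C_i$ contains no element of $[1,t]$. The residue $i=0$ always qualifies, since $C_0=\{0\}$. So it remains to count the nonzero residues $i$ with $iq^j \bmod n\ge t+1$ for every $j$.

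The key device is base-$q$ expansion. Write $i=\sum_{s=0}^{m-1}i_sq^s$ with $0\le i_s\le q-1$, for $1\le i\le n-1$. Since $q^m\equiv 1\pmod n$, multiplying by $q$ modulo $n$ cyclically rotates the digit string $(i_{m-1},\dots,i_0)$. The only possible ambiguity is the all-$(q-1)$ string, which represents $n\equiv 0$ and does not occur for $1\le i\le n-1$. Also observe that
\[
t=r\,\frac{q^m-1}{q-1}=\sum_{s=0}^{m-1}rq^s ,
\]
so $t$ is the string with every digit equal to $r$. Comparing integers of $m$ digits amounts to comparing their strings lexicographically from the leading digit.

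Next we characterize the qualifying $i$: the coset $C_i$ avoids $[1,t]$ if and only if every digit satisfies $i_s\ge r$ and the digits are not all equal to $r$. For necessity, suppose some digit satisfies $i_s<r$. Rotate that digit into the leading position; the resulting element of $C_i$ is nonzero and smaller than the string $rr\cdots r$, so it lies in $[1,t]$. Likewise, if all digits equal $r$, then $i=t\in[1,t]$. For sufficiency, suppose every digit is at least $r$ and not all digits equal $r$. Then every rotation dominates $rr\cdots r$ digitwise and differs from it in some position, so it is strictly larger than $t$.

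Finally we count. Strings with all digits in $\{r,\dots,q-1\}$ number $(q-r)^m$. From these we remove the all-$r$ string and the all-$(q-1)$ string, which is excluded because it is $\equiv 0$. These two strings are distinct precisely because $r<q-1$; this is the one place the hypothesis enters. This leaves $(q-r)^m-2$ nonzero residues, and adding the residue $0$ gives $k=(q-r)^m-1$.

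The main point needing care is the identification of multiplication by $q$ modulo $n$ with digit rotation, together with the boundary strings (all $r$ and all $q-1$). Once that is handled, the rest is the counting just described.
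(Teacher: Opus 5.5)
Your proposal is correct and complete. The paper gives no proof of this lemma to compare against: it imports the result from \cite{CJL2020} (Theorem 4.10) and uses it as a black box in Theorem~\ref{T-BCH}. Your digit-rotation count therefore makes the lemma self-contained.

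Your argument is the standard cyclotomic-coset technique for such dimension formulas. Its steps are:
\begin{enumerate}
\item Multiplication by $q$ modulo $n=q^m-1$ acts as cyclic rotation of base-$q$ strings.
\item $t$ is the all-$r$ string.
\item The residues outside the defining set are $0$ together with the strings having all digits in $\{r,\dots,q-1\}$, excluding the all-$r$ and all-$(q-1)$ strings.
\end{enumerate}

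Two points you use implicitly both hold. First, $t<n$ because $r<q-1$, so every comparison is between residues in $[0,n-1]$. Second, a rotation preserves the multiset of digits, so it never turns a nonzero, non-all-$(q-1)$ string into the all-$(q-1)$ string; hence each rotation represents $iq^j \bmod n$ exactly, and $C_i$ is precisely the set of rotations of the string of $i$.

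Your use of $r<q-1$ is also placed correctly: it keeps the all-$r$ and all-$(q-1)$ strings distinct, giving the count $(q-r)^m-2+1=(q-r)^m-1$.
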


The following lemma records the well-known equivalence between a Goppa code and a primitive BCH code. A brief proof is included for completeness.

\begin{lemma}\label{L-2-1}
	Let $n = q^m - 1$ and $\alpha$ be a primitive element of $\mathbb{F}_{q^m}$. Let the support set be the ordered tuple $L = \{1, \alpha, \alpha^2, \dots, \alpha^{n-1}\}$ and the Goppa polynomial be $G(x) = x^{\delta-1}$. Then the Goppa code $\Gamma_q(L, G)$ is equivalent to the primitive BCH code $\bC_{(q, n, \delta, 1)}$.
\end{lemma}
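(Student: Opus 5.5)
The plan is to expand the Goppa condition modulo $x^{\delta-1}$ into power-sum equations and compare them with the zero set of the BCH code. Write $t=\delta-1$ and index the coordinates by $\alpha_i=\alpha^{i}$, $0\le i\le n-1$. Since every $\alpha_i\neq 0$, each term $\frac{1}{x-\alpha_i}$ is a unit in $\F_{q^m}[[x]]$, and we have the expansion
\[
\frac{1}{x-\alpha_i}=-\alpha_i^{-1}\frac{1}{1-x\alpha_i^{-1}}=-\sum_{j\ge 0}\alpha_i^{-(j+1)}x^{j}.
\]
Because $x^t$ divides a polynomial exactly when that polynomial vanishes modulo $x^t$ in the power series ring, $\mathbf c\in\Gamma_q(L,x^t)$ holds if and only if
\[
\sum_{i=0}^{n-1}c_i\alpha^{-i(j+1)}=0 \qquad\text{for } j=0,1,\dots,t-1.
\]
Equivalently, $\sum_i c_i(\alpha^{-1})^{ik}=0$ for $k=1,\dots,\delta-1$.

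Next, set $\beta=\alpha^{-1}$, which is again a primitive element of $\F_{q^m}$. Identify $\mathbf c$ with the polynomial $c(x)=\sum_i c_ix^i\in\F_q[x]/(x^n-1)$. The conditions above then say exactly that $c(\beta^k)=0$ for $1\le k\le \delta-1$. Since $c(x)$ has coefficients in $\F_q$, this is equivalent to $c(x)$ being divisible by $\mathrm{lcm}(m_1(x),\dots,m_{\delta-1}(x))$, where the minimal polynomials are now taken with respect to $\beta$. Hence $\Gamma_q(L,x^{\delta-1})$ equals the narrow-sense BCH code of length $n$ and designed distance $\delta$ defined relative to the primitive element $\beta$.

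The only delicate point is that the paper's definition of $\bC_{(q,n,\delta,1)}$ uses a fixed primitive $n$-th root of unity, which we may take to be $\alpha$, whereas we obtained the code relative to $\beta=\alpha^{-1}$. To close this gap I would apply the coordinate permutation $i\mapsto -i \pmod n$. This permutation sends $c(x)$ to $x^n c(x^{-1}) \bmod (x^n-1)$, so $c(\beta^k)=0$ becomes $c'(\alpha^k)=0$. The permutation therefore maps the code onto $\bC_{(q,n,\delta,1)}$ defined via $\alpha$, which gives a permutation equivalence.

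Alternatively, one can note that $\beta$ is itself a primitive element, so the code is literally a narrow-sense BCH code with respect to another choice of primitive root. Either way the result is equivalence rather than equality, which is precisely what the lemma claims.
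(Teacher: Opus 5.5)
Your proposal is correct and follows essentially the paper's route: both reduce membership in $\Gamma_q(L,x^{\delta-1})$ to the equations $\sum_i c_i\beta^{ik}=0$ for $1\le k\le\delta-1$ with $\beta=\alpha^{-1}$, and then recognize the narrow-sense BCH code with respect to the primitive element $\beta$. The differences are minor. You derive these equations by expanding $(x-\alpha_i)^{-1}$ as a power series modulo $x^{\delta-1}$, whereas the paper specializes the standard parity-check matrix with entries $\alpha_i^{r-1}G(\alpha_i)^{-1}$ and reverses its rows. You also make explicit the coordinate permutation $i\mapsto -i \pmod n$ relating the $\beta$-code to the $\alpha$-code, which the paper leaves implicit.
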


\begin{proof}
\rm{
	Let $t=\delta-1$ and write $\alpha_i=\alpha^i$ for $0\le i\le n-1$. By Definition \ref{D-goppa}, a parity-check matrix of $\Gamma_q(L,G)$ is
	$$
	H=
	\begin{pmatrix}
		G(\alpha_0)^{-1} & G(\alpha_1)^{-1} & \cdots & G(\alpha_{n-1})^{-1}\\
		\alpha_0G(\alpha_0)^{-1} & \alpha_1G(\alpha_1)^{-1} & \cdots & \alpha_{n-1}G(\alpha_{n-1})^{-1}\\
		\vdots & \vdots & \ddots & \vdots\\
		\alpha_0^{t-1}G(\alpha_0)^{-1} & \alpha_1^{t-1}G(\alpha_1)^{-1} & \cdots & \alpha_{n-1}^{t-1}G(\alpha_{n-1})^{-1}
	\end{pmatrix}.
	$$
	Since $G(x)=x^t$, the entry in the $r$-th row and $i$-th column is
	$$
	(\alpha^i)^{r-1}(\alpha^i)^{-t}=\alpha^{i(r-t-1)},
	\quad 1\le r\le t.
	$$
	Hence
	$$
	H=
	\begin{pmatrix}
		1 & \alpha^{-t} & \alpha^{-2t} & \cdots & \alpha^{-(n-1)t}\\
		1 & \alpha^{-(t-1)} & \alpha^{-2(t-1)} & \cdots & \alpha^{-(n-1)(t-1)}\\
		\vdots & \vdots & \vdots & \ddots & \vdots\\
		1 & \alpha^{-1} & \alpha^{-2} & \cdots & \alpha^{-(n-1)}
	\end{pmatrix}.
	$$
	Let $\beta=\alpha^{-1}$. Then $\beta$ is also a primitive element of $\mathbb{F}_{q^m}$. Reversing the order of the rows of $H$ yields
	$$
	H'=
	\begin{pmatrix}
		1 & \beta & \beta^2 & \cdots & \beta^{n-1}\\
		1 & \beta^2 & \beta^4 & \cdots & \beta^{2(n-1)}\\
		\vdots & \vdots & \vdots & \ddots & \vdots\\
		1 & \beta^t & \beta^{2t} & \cdots & \beta^{t(n-1)}
	\end{pmatrix},
	$$
	which is the standard parity-check matrix of the primitive BCH code $\bC_{(q,n,t+1,1)}$ with consecutive roots
	$\beta,\beta^2,\dots,\beta^t.$
	Therefore, $\Gamma_q(L,G)$ is equivalent to $\bC_{(q,n,t+1,1)}$.
	\qed
}
\end{proof}

\section{Characterization of Goppa codes with $d=\delta$ and applications to primitive BCH codes}\label{s3}

In this section, we establish a necessary and sufficient condition for a Goppa code to
have minimum distance $d=\delta=t+1$ and derive a constructive result of this criterion.
Then, we specialize the monomial case $G(x)=x^t$ to primitive BCH codes and obtain a sufficient condition for $\bC_{(q,q^m-1,\delta,1)}$ to have minimum distance $d=\delta$.

\subsection{A criterion for Goppa codes with $d=\delta$}
\begin{theorem}\label{T-0}
	Let $\Gamma_q(L,G)$ be a Goppa code with $\deg(G)=t$. Then the minimum distance of $\Gamma_q(L,G)$ is equal to $t+1$ if and only if there exist pairwise distinct elements
	$
	\alpha_{i_1},\alpha_{i_2},\ldots,\alpha_{i_{t+1}}\in L
	$
	such that, for
	$
	F(x)=\prod_{\ell=1}^{t+1}(x-\alpha_{i_\ell}),
	$
	the following holds
	\begin{equation}\label{e-T-0-0}
		R_j:=\frac{G(\alpha_{i_j})}{G(\alpha_{i_{t+1}})}
		\cdot
		\frac{F'(\alpha_{i_{t+1}})}{F'(\alpha_{i_j})}
		\in \mathbb{F}_q^\ast,
		\quad 1\le j\le t.
	\end{equation}

\end{theorem}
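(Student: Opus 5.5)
Since the Goppa bound (Lemma~\ref{L-goppa}) already gives $d\ge t+1$, the statement reduces to the following claim: $\Gamma_q(L,G)$ has a codeword of weight exactly $t+1$ if and only if there is a $(t+1)$-subset of $L$ for which all the ratios $R_j$ in \eqref{e-T-0-0} lie in $\F_q^\ast$. The plan is to characterize all codewords whose support lies inside a fixed $(t+1)$-set $\{\alpha_{i_1},\dots,\alpha_{i_{t+1}}\}$ by partial fractions, and then to read off when such a codeword exists with all coordinates nonzero and in $\F_q$.

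\textbf{Step 1: the $\F_{q^m}$-solutions.} Fix the $(t+1)$-set and let $F$ be as in the statement. A vector $\mathbf c$ supported there lies in $\Gamma_q(L,G)$ exactly when
\[
\sum_{\ell=1}^{t+1}\frac{c_{i_\ell}}{x-\alpha_{i_\ell}}=\frac{P(x)}{F(x)},\qquad \deg P\le t,
\]
with $G\mid P$. Here I use that $\gcd(G,F)=1$, since $G(\alpha_i)\neq 0$ on $L$. Because $\deg G=t$ and $\deg P\le t$, the condition $G\mid P$ forces $P=\lambda G$ for some $\lambda\in\F_{q^m}$. Comparing residues at each $\alpha_{i_\ell}$ then gives
\[
c_{i_\ell}=\frac{\lambda\, G(\alpha_{i_\ell})}{F'(\alpha_{i_\ell})}.
\]
So the solutions over $\F_{q^m}$ with support in this set form the one-dimensional space spanned by $\mathbf v=\bigl(G(\alpha_{i_\ell})/F'(\alpha_{i_\ell})\bigr)_\ell$. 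Every coordinate of $\mathbf v$ is nonzero, because $G(\alpha_{i_\ell})\neq 0$ and $F$ is separable.

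\textbf{Step 2: descending to $\F_q$.} A codeword of weight $t+1$ with this support exists if and only if some nonzero multiple $\lambda\mathbf v$ has all entries in $\F_q^\ast$. Such a $\lambda$ exists if and only if every ratio $v_j/v_{t+1}$ lies in $\F_q^\ast$, and this ratio is precisely $R_j$.
\begin{itemize}
\item For the forward direction, if $\lambda\mathbf v\in(\F_q^\ast)^{t+1}$, then each $R_j=(\lambda v_j)/(\lambda v_{t+1})$ is a quotient of elements of $\F_q^\ast$.
\item For the converse, take $\lambda=v_{t+1}^{-1}$. The codeword is then $(R_1,\dots,R_t,1)$, which has weight $t+1$.
\end{itemize}

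\textbf{Assembling.} If $d=t+1$, a minimum-weight codeword has a support of size $t+1$, and Step 2 yields the required $\alpha$'s. Conversely, the $\alpha$'s in the statement yield a codeword of weight $t+1$, and combined with $d\ge t+1$ this gives $d=t+1$.

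The main point requiring care is Step 1, in particular the degree count that forces $P=\lambda G$. One must check that the congruence modulo $G$ in Definition~\ref{D-goppa} is correctly interpreted for rational functions: the inverse of $x-\alpha_i$ is taken modulo $G$, and multiplying through by $F$ is legitimate because $F$ is invertible modulo $G$. One must also handle the edge case $P=0$, which forces $\mathbf c=0$. Both points are routine once stated.
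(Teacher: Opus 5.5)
Your proof is correct, but it reaches the key formula by a different route from the paper's.

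\emph{The paper's route.} It replaces the defining congruence by the parity-check equations $\sum_{\ell} c_{i_\ell}\alpha_{i_\ell}^r/G(\alpha_{i_\ell})=0$ for $0\le r\le t-1$, taking this equivalence as standard. It then singles out $c_{i_{t+1}}$ and inverts the $t\times t$ matrix $\bV\bD$ by Lagrange interpolation. Finally it rewrites the Lagrange basis values through $F'$ to obtain $c_{i_j}=c_{i_{t+1}}R_j$.

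\emph{Your route.} You stay with the congruence itself. Writing the sum as $P/F$ with $\deg P\le t$, the condition becomes $G\mid P$. The degree count forces $P=\lambda G$, and comparing residues gives $c_{i_\ell}=\lambda G(\alpha_{i_\ell})/F'(\alpha_{i_\ell})$.

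\emph{What each buys.} Your version needs no parity-check reformulation and treats all $t+1$ positions symmetrically. It exhibits the whole solution line at once and explains structurally why the quotient $G/F'$ appears. It also makes Corollary~\ref{C-derivative} and the remark after it transparent. Lagrange interpolation gives $G(x)=\sum_\ell v_\ell F(x)/(x-\alpha_{i_\ell})$, so the criterion says exactly that $G$ is an $\F_{q^m}^\ast$-multiple of a polynomial of the form in that remark with all $w_\ell\in\F_q^\ast$. The paper's route instead yields the normalized formula \eqref{e-T-0-5} with a distinguished coordinate, which is what Proposition~\ref{P-BCH} reuses with $\alpha_{i_{t+1}}=1$.

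One small point to make explicit. In Step 1 you derived only that every solution is a multiple of $\mathbf v$. The converse direction in Step 2 also needs that $\mathbf v$ itself is a solution. This follows from the interpolation identity $\sum_\ell v_\ell/(x-\alpha_{i_\ell})=G(x)/F(x)$, which holds because $\deg G\le t<\deg F$.
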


\begin{proof}
\rm{
	We first prove the sufficiency. Assume that there exist pairwise distinct elements $\alpha_{i_1},\ldots,\alpha_{i_{t+1}}\in L$ satisfying \eqref{e-T-0-0}. By the Goppa bound $d \ge t+1$, it suffices to construct a codeword of weight $t+1$.
	
	Let
	$
	I=\{i_1,i_2,\ldots,i_{t+1}\}\subseteq \{1,2,\ldots,n\},
	$
	and let $\bc=(c_1,c_2,\ldots,c_n)\in \mathbb{F}_q^n$ be a vector with
	$
	\operatorname{supp}(\bc)=I.
	$
	Write $c_{i_1},c_{i_2},\ldots,c_{i_{t+1}}$ for its nonzero coordinates. By Definition \ref{D-goppa}, the vector $\bc$ belongs to $\Gamma_q(L,G)$ if and only if
	\begin{equation}\label{e-T-0-1}
		\sum_{\ell=1}^{t+1}\frac{c_{i_\ell}}{x-\alpha_{i_\ell}}\equiv 0 \pmod{G(x)}.
	\end{equation}
	Equivalently, the coordinates $c_{i_1},\ldots,c_{i_{t+1}}$ satisfy
	\begin{equation}\label{e-T-0-2}
		\sum_{\ell=1}^{t+1} c_{i_\ell}\frac{\alpha_{i_\ell}^r}{G(\alpha_{i_\ell})}=0,
		\quad 0\le r\le t-1.
	\end{equation}
	
	Separate the variable $c_{i_{t+1}}$. Let
	$$
	\bV=
	\begin{pmatrix}
		1 & 1 & \cdots & 1\\
		\alpha_{i_1} & \alpha_{i_2} & \cdots & \alpha_{i_t}\\
		\vdots & \vdots & \ddots & \vdots\\
		\alpha_{i_1}^{t-1} & \alpha_{i_2}^{t-1} & \cdots & \alpha_{i_t}^{t-1}
	\end{pmatrix}
	$$
	and
	$$
	\bD=\operatorname{diag}\left(G(\alpha_{i_1})^{-1},\ldots,G(\alpha_{i_t})^{-1}\right).
	$$
	Then the coefficient matrix of $c_{i_1},\ldots,c_{i_t}$ in \eqref{e-T-0-2} is $\bV\bD$. Since the elements $\alpha_{i_1},\ldots,\alpha_{i_t}$ are distinct and $G(\alpha_{i_j})\neq 0$, the matrix $\bV\bD$ is invertible. 
	Hence
	\begin{equation}\label{e-T-0-3}
		\begin{pmatrix}
			c_{i_1}\\
			c_{i_2}\\
			\vdots\\
			c_{i_t}
		\end{pmatrix}
		=
		-\frac{c_{i_{t+1}}}{G(\alpha_{i_{t+1}})}
		\bD^{-1}\bV^{-1}
		\begin{pmatrix}
			1\\
			\alpha_{i_{t+1}}\\
			\vdots\\
			\alpha_{i_{t+1}}^{t-1}
		\end{pmatrix}.
	\end{equation}
	
	By the Lagrange interpolation formula, the $j$-th entry of
	$$
	V^{-1}
	\begin{pmatrix}
		1\\
		\alpha_{i_{t+1}}\\
		\vdots\\
		\alpha_{i_{t+1}}^{t-1}
	\end{pmatrix}
	$$
	is
	$$
	\prod_{\substack{l=1 \\ l \neq j}}^{t} \frac{\alpha_{i_{t+1}} - \alpha_{i_l}}{\alpha_{i_j} - \alpha_{i_l}}.
	$$
	
	On the other hand,
	$$
		\prod_{\substack{l=1 \\ l \neq j}}^{t} (\alpha_{i_{t+1}} - \alpha_{i_l}) = \frac{F'(\alpha_{i_{t+1}})}{\alpha_{i_{t+1}} - \alpha_{i_j}}, 
		\quad \quad 
		\prod_{\substack{l=1 \\ l \neq j}}^{t} (\alpha_{i_j} - \alpha_{i_l}) = \frac{F'(\alpha_{i_j})}{\alpha_{i_j} - \alpha_{i_{t+1}}},
	$$
	and therefore
	\begin{equation}\label{e-T-0-4}
		\prod_{\substack{l=1 \\ l \neq j}}^{t} \frac{\alpha_{i_{t+1}} - \alpha_{i_l}}{\alpha_{i_j} - \alpha_{i_l}}
		=-\frac{F'(\alpha_{i_{t+1}})}{F'(\alpha_{i_j})}.
	\end{equation}
	Substituting \eqref{e-T-0-4} into \eqref{e-T-0-3} gives
	\begin{equation}\label{e-T-0-5}
		c_{i_j}=
		c_{i_{t+1}}
		\cdot
		\frac{G(\alpha_{i_j})}{G(\alpha_{i_{t+1}})}
		\cdot
		\frac{F'(\alpha_{i_{t+1}})}{F'(\alpha_{i_j})},
		\qquad 1\le j\le t.
	\end{equation}

	By assumptions, $$R_j=\frac{G(\alpha_{i_j})}{G(\alpha_{i_{t+1}})}
	\cdot
	\frac{F'(\alpha_{i_{t+1}})}{F'(\alpha_{i_j})} $$ belongs to $\mathbb{F}_q^\ast$ for every $j$. Hence, after choosing any $c_{i_{t+1}}\in \mathbb{F}_q^\ast$, all coordinates $c_{i_j}$ also belong to $\mathbb{F}_q^\ast$. This yields a valid codeword of weight $t+1$ in $\Gamma_q(L,G)$. Therefore, $d=t+1$.
	
	We next prove the necessity. Suppose that $d=t+1$. Then there exists a codeword
	$
	\bc=(c_1,c_2,\ldots,c_n)\in \Gamma_q(L,G)
	$
	with $\operatorname{wt}(\bc)=t+1$. 
	Let
	$
	\operatorname{supp}(\bc)=\{i_1,i_2,\ldots,i_{t+1}\},
	$
	and let $\alpha_{i_1},\ldots,\alpha_{i_{t+1}}\in L$ be the corresponding support elements. Since $\bc$ is a codeword, its nonzero coordinates satisfy \eqref{e-T-0-2}, and the same computation gives \eqref{e-T-0-5}. Since $c_{i_j},c_{i_{t+1}}\in \mathbb{F}_q^\ast$, we obtain
	$$
	\frac{G(\alpha_{i_j})}{G(\alpha_{i_{t+1}})}
	\cdot
	\frac{F'(\alpha_{i_{t+1}})}{F'(\alpha_{i_j})}=
	\frac{c_{i_j}}{c_{i_{t+1}}}
	\in \mathbb{F}_q^\ast,
	\qquad 1\le j\le t.
	$$
	Thus \eqref{e-T-0-0} holds. This completes the proof. 
\qed
}
\end{proof}

\begin{remark}
\rm{
	Theorem~\ref{T-0} is specific to the case $d=\delta=t+1$. 
	For larger weights, the same approach leads to a Vandermonde-type system involving additional unknowns, which in general does not yield a closed-form criterion depending only on the support elements. 
	Therefore, the characterization in Theorem~\ref{T-0} is naturally restricted to this case.
}
\end{remark}

Theorem~\ref{T-0} also admits a constructive interpretation. 
In particular, once a set of $t+1$ distinct elements $\alpha_{i_1},\ldots,\alpha_{i_{t+1}}\in \F_{q^m}$ satisfies \eqref{e-T-0-0}, a Goppa code with minimum distance $d=t+1$ can be constructed. 
This leads to the following corollary.

\begin{corollary}\label{C-derivative}
	Let $q$ be a power of a prime $p$, and $t$ an integer such that $p\nmid t+1$.
	Let
	$F(x)\in \mathbb F_{q^m}[x]$
	be a polynomial of degree $t+1$ with $t+1$ distinct roots
	$\alpha_{i_1},\ldots,\alpha_{i_{t+1}}\in \mathbb F_{q^m}$.
	Define
	$$
	G(x)=(t+1)^{-1}F'(x).
	$$
	Then $\deg(G)=t$ and
	$G(\alpha_{i_j})\neq 0$ for all $1\le j\le t+1$.
	Moreover, for any support set
	$L\subseteq \mathbb F_{q^m}$
	containing $\{\alpha_{i_1},\ldots,\alpha_{i_{t+1}}\}$ and satisfying $G(\alpha) \neq 0$ for all $\alpha \in L$,
	the Goppa code $\Gamma_q(L,G)$ has minimum
	distance $d=t+1$.
\end{corollary}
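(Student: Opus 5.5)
The plan is to apply Theorem~\ref{T-0} directly, using the roots of $F$ themselves as the $t+1$ support elements. The key observation is that when $G$ is a nonzero constant multiple of $F'$, the ratio in \eqref{e-T-0-0} collapses to $1$.

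First, the degree and nonvanishing claims. Since $F$ has degree $t+1$, say with leading coefficient $a\neq 0$, the leading term of $F'$ is $(t+1)a\,x^{t}$. The hypothesis $p\nmid t+1$ means $t+1$ is nonzero in $\mathbb F_q$, so this coefficient is nonzero and $(t+1)^{-1}$ makes sense. Hence $\deg(G)=t$.

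Next, because the roots $\alpha_{i_1},\ldots,\alpha_{i_{t+1}}$ are pairwise distinct, $F$ is squarefree. A simple root of $F$ is not a root of $F'$; explicitly,
$$F'(\alpha_{i_j})=a\prod_{\ell\neq j}(\alpha_{i_j}-\alpha_{i_\ell})\neq 0.$$
Therefore $G(\alpha_{i_j})\neq 0$ for every $j$. It follows that any $L$ as in the statement is a legitimate support set and contains these $t+1$ points.

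Now take these $t+1$ elements in Theorem~\ref{T-0}. The polynomial $\prod_{\ell}(x-\alpha_{i_\ell})$ appearing there equals $a^{-1}F$, so its derivative is $a^{-1}F'=a^{-1}(t+1)G$. Substituting into \eqref{e-T-0-0}, for each $1\le j\le t$ we get
$$R_j=\frac{G(\alpha_{i_j})}{G(\alpha_{i_{t+1}})}\cdot\frac{a^{-1}(t+1)G(\alpha_{i_{t+1}})}{a^{-1}(t+1)G(\alpha_{i_j})}=1\in\mathbb F_q^\ast.$$
By the sufficiency direction of Theorem~\ref{T-0}, $\Gamma_q(L,G)$ has minimum distance $t+1$.

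There is no real obstacle here. The only care needed is that the monic normalization used in Theorem~\ref{T-0} may differ from $F$ by the constant $a$, and that this constant cancels in $R_j$. The condition $p\nmid t+1$ is used exactly once, to guarantee $\deg G=t$.
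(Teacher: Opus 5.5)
Your proposal is correct and follows essentially the same route as the paper: verify $\deg G=t$ and $G(\alpha_{i_j})\neq 0$, then observe that $R_j=1$ in Theorem~\ref{T-0}. Your explicit handling of a possibly non-monic $F$, where the leading coefficient $a$ cancels in $R_j$, is a small point of rigor that the paper's proof passes over silently.
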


\begin{proof}
	\rm{
		Since $F(x)$ has $t+1$ distinct roots and $p\nmid t+1$, the polynomial $F'(x)$ has degree
		$t$ and $F'(\alpha_{i_j})\neq 0$ for all $j$. Hence $\deg(G)=t$ and
		$G(\alpha_{i_j})\neq 0$ for all $j$.
		
		For $1\le j\le t$, we obtain
		$$
		\frac{G(\alpha_{i_j})}{G(\alpha_{i_{t+1}})}
		\cdot
		\frac{F'(\alpha_{i_{t+1}})}{F'(\alpha_{i_j})}
		=
		1\in \mathbb F_q^\ast.
		$$
		Therefore the condition \eqref{e-T-0-0} in Theorem~\ref{T-0} is satisfied, and the conclusion follows.
		\qed
	}
\end{proof}

\begin{remark}
	\rm{
		Corollary~\ref{C-derivative} shows that Theorem~\ref{T-0} is constructive rather than merely existential.
		More generally, let $w_1,\ldots,w_{t+1} \in \mathbb{F}_q^\ast$, and define
		$$
		G(x) = \sum_{\ell=1}^{t+1} w_\ell \frac{F(x)}{x-\alpha_{i_\ell}}.
		$$
		Then $G(x)$ is a polynomial of degree $t$. Moreover,
		$$
		G(\alpha_{i_\ell}) = w_\ell F'(\alpha_{i_\ell}) \neq 0,\quad 1\le \ell \le t+1.
		$$
		
		Consequently, for any support set $L \subseteq \mathbb{F}_{q^m}$ containing $\{\alpha_{i_1},\ldots,\alpha_{i_{t+1}}\}$ and satisfying $G(\alpha)\neq 0$ for all $\alpha\in L$, the Goppa code $\Gamma_q(L,G)$ has minimum distance $d=t+1$.
		Corollary~\ref{C-derivative} corresponds to the special case $w_1=\cdots=w_{t+1}=(t+1)^{-1}$.
	}
\end{remark}

\subsection{Specialization to primitive BCH codes}

We now specialize Theorem~\ref{T-0} to the case when $G(x)=x^t$ and $L=\{1,\alpha,\alpha^2,\dots,\alpha^{n-1}\}$ where $\alpha$ is a primitive element of $\F_{q^m}$. 
By Lemma~\ref{L-2-1}, the Goppa code $\Gamma_q(L,G)$ is equivalent to the primitive BCH code $\bC_{(q,q^m-1,t+1,1)}$. 
A criterion for primitive BCH codes to have $d=\delta$ was obtained in \cite[Theorem~3.1]{2026BCH}. 
The next proposition shows that this criterion is precisely the monomial specialization of Theorem~\ref{T-0}.

\begin{proposition}\label{P-BCH}
	Let $n=q^m-1$, let $\alpha$ be a primitive element of $\F_{q^m}$.
	Define
	$L=\{1,\alpha,\alpha^2,\dots,\alpha^{n-1}\}$ and $G(x)=x^t.$
	Under the equivalence in Lemma~\ref{L-2-1}, Theorem~\ref{T-0} is equivalent to the locator criterion in \cite[Theorem~3.1]{2026BCH}.
\end{proposition}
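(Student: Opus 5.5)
The plan is to specialize condition \eqref{e-T-0-0} of Theorem~\ref{T-0} to $G(x)=x^t$ and rewrite it as a condition on the error locators, which is the form of \cite[Theorem~3.1]{2026BCH}. Write the chosen support elements as $x_\ell=\alpha_{i_\ell}$ for $1\le \ell\le t+1$, and put $F(x)=\prod_{\ell}(x-x_\ell)$. Then $F'(x_j)=\prod_{l\ne j}(x_j-x_l)$, and \eqref{e-T-0-0} says that
$$
\frac{x_j^{t}}{F'(x_j)}\cdot\frac{F'(x_{t+1})}{x_{t+1}^{t}}\in\F_q^\ast,\qquad 1\le j\le t.
$$
Equivalently, the $t+1$ quantities
$$
w_j:=\frac{x_j^t}{\prod_{l\neq j}(x_j-x_l)}
$$
all lie in a single coset $\lambda\F_q^\ast$ of $\F_q^\ast$ in $\F_{q^m}^\ast$.

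Next I will translate this into the BCH/locator language of \cite{2026BCH}. By Lemma~\ref{L-2-1}, a codeword of $\Gamma_q(L,G)$ corresponds to a codeword of $\bC_{(q,n,t+1,1)}$ with locators $y_\ell=x_\ell^{-1}=\beta^{i_\ell}$, where $\beta=\alpha^{-1}$. The locator criterion in \cite[Theorem~3.1]{2026BCH} should say the following: there exist distinct $y_1,\dots,y_{t+1}\in\F_{q^m}^\ast$ such that the unique (up to scalar) solution $(c_\ell)$ of $\sum_\ell c_\ell y_\ell^{k}=0$ for $1\le k\le t$ has all entries proportional to elements of $\F_q^\ast$. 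Typically this is phrased as requiring the ratios $\dfrac{y_j\prod_{l\ne j}(y_j-y_l)^{-1}}{y_{t+1}\prod_{l\ne t+1}(y_{t+1}-y_l)^{-1}}$ to lie in $\F_q^\ast$, or in terms of the elementary symmetric functions of the $y_\ell$.

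So I will compute this kernel vector directly. Factoring $y_\ell$ out of the Vandermonde system reduces it to the standard Vandermonde system in powers $0,\dots,t-1$. Its kernel vector is $c_\ell=y_\ell^{-1}/\prod_{l\ne\ell}(y_\ell-y_l)$, by the same Lagrange identity used in the proof of Theorem~\ref{T-0}. Substituting $y=x^{-1}$ and using
$$
y_\ell-y_l=\frac{x_l-x_\ell}{x_\ell x_l}
$$
gives
$$
\prod_{l\ne \ell}(y_\ell-y_l)=(-1)^t\,x_\ell^{-t}\Big(\prod_{l\ne\ell}x_l\Big)^{-1}\prod_{l\ne\ell}(x_\ell-x_l).
$$
Hence $c_\ell$ equals a common factor $(-1)^t\prod_{l}x_l$, independent of $\ell$, times $x_\ell^{t}/F'(x_\ell)=w_\ell$. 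The two criteria therefore coincide: the ratios $c_j/c_{t+1}$ equal the ratios $R_j$ of \eqref{e-T-0-0}. Both directions of the equivalence follow, since the correspondence $x\leftrightarrow x^{-1}$ is a bijection on $L=\F_{q^m}^\ast$ and the coordinate permutation in Lemma~\ref{L-2-1} preserves supports.

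The main obstacle is matching the exact form in which \cite[Theorem~3.1]{2026BCH} is stated, which may be via a locator polynomial $\sigma(z)=\prod(1-y_\ell z)$ and its derivative rather than explicit products. If so, I will use $\sigma'(y_\ell^{-1})=-y_\ell\prod_{l\ne\ell}(1-y_l y_\ell^{-1})$, which is essentially $F'$ up to the reciprocal change of variables. The sign and power bookkeeping in this change of variables is where I expect errors to hide, so I will check it against the case $t=1$.
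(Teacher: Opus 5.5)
Your computation is correct and is essentially the paper's argument. The paper normalizes $\alpha_{i_{t+1}}=1$ and writes $\alpha_{i_j}=x_j^{-1}$, so its $x_j$ are your $y_j$, and then computes $R_j=-S_j$ with $S_j=\frac{\prod_{k\ne j}(1-x_k)}{x_j\prod_{k\ne j}(x_j-x_k)}$, the quantity in the cited criterion; this is exactly your identity $c_j/c_{t+1}=R_j$ specialized to $y_{t+1}=1$ (a harmless normalization, since for $G(x)=x^t$ each $R_j$ is unchanged when all support elements are multiplied by a common $\mu\in\F_{q^m}^\ast$, equivalently because the BCH code is cyclic), and one small slip: your guessed phrasing of that criterion should carry $y_j^{-1}$ and $y_{t+1}^{-1}$ rather than $y_j$ and $y_{t+1}$, as your own kernel vector $c_\ell=y_\ell^{-1}/\prod_{l\ne\ell}(y_\ell-y_l)$ shows.
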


\begin{proof}
\rm{
	Let
	$$
	\alpha_{i_{t+1}}=1
	\quad\text{and}\quad
	\alpha_{i_j}=x_j^{-1},
	\quad 1\le j\le t,
	$$
	and define
	$$
	F(x)=(x-1)\prod_{k=1}^t (x-x_k^{-1}).
	$$
	Then
	$$
	F(x)=\prod_{\ell=1}^{t+1}(x-\alpha_{i_\ell}).
	$$
	Since $G(x)=x^t$, a direct computation gives
	$$
	\frac{G(\alpha_{i_j})}{G(\alpha_{i_{t+1}})}
	\cdot
	\frac{F'(\alpha_{i_{t+1}})}{F'(\alpha_{i_j})}
	=
	-\frac{\prod_{k\ne j}(1-x_k)}
	{x_j\prod_{k\ne j}(x_j-x_k)},
	\quad 1\le j\le t.
	$$
	Write
	$$
	S_j:=
	\frac{\prod_{k\ne j}(1-x_k)}
	{x_j\prod_{k\ne j}(x_j-x_k)}.
	$$
	Then
	$$
	\frac{G(\alpha_{i_j})}{G(\alpha_{i_{t+1}})}
	\cdot
	\frac{F'(\alpha_{i_{t+1}})}{F'(\alpha_{i_j})}
	=
	-S_j,
	\quad 1\le j\le t.
	$$
	
	On the other hand, equation \eqref{e-T-0-5} in the proof of Theorem~\ref{T-0} yields
	$$
	c_{i_j}
	=
	c_{i_{t+1}}
	\cdot
	\frac{G(\alpha_{i_j})}{G(\alpha_{i_{t+1}})}
	\cdot
	\frac{F'(\alpha_{i_{t+1}})}{F'(\alpha_{i_j})},
	\quad 1\le j\le t.
	$$
	Hence
	$$
	c_{i_j}=-c_{i_{t+1}}S_j,
	\quad 1\le j\le t.
	$$
	Therefore, the condition in Theorem~\ref{T-0} is equivalent to the  criterion in \cite[Theorem~3.1]{2026BCH}.
	\qed
}
\end{proof}

The following corollary gives a sufficient condition such that the primitive BCH code
$\bC_{(q,q^m-1,\delta,1)}$ attains its designed distance.

\begin{corollary}\label{C-5-1}
	Let $\delta=t+1$. Let $M(x)\in \F_{q^m}[x]$ be a squarefree polynomial of degree $t$
	such that the following hold
	\begin{enumerate}[label=(\roman*)]
		\item $M(x)$ has $\delta-1$ distinct nonzero roots $x_1,\ldots,x_t\in \F_{q^m}$.
		\item $x(x-1)M'(x)\equiv w \pmod{M(x)}$ for some $w\in \F_q^*$.
		\item $M(1)\in \F_q^*$.
	\end{enumerate}
	Then the primitive BCH code $\bC_{(q,q^m-1,\delta,1)}$ has minimum distance $\delta$.
\end{corollary}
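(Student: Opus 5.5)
The plan is to apply Theorem~\ref{T-0} directly to the monomial Goppa polynomial $G(x)=x^t$ with support $L=\F_{q^m}^*=\{1,\alpha,\dots,\alpha^{n-1}\}$. The $t+1$ support elements will be chosen as in the proof of Proposition~\ref{P-BCH}: $\alpha_{i_{t+1}}=1$ and $\alpha_{i_j}=x_j^{-1}$ for $1\le j\le t$. Lemma~\ref{L-2-1} then transfers the conclusion $d=t+1$ from $\Gamma_q(L,x^t)$ to $\bC_{(q,q^m-1,\delta,1)}$, since equivalent codes have the same minimum distance.

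First I check that these choices are admissible. By (i), the $x_j$ are distinct and nonzero, so the $x_j^{-1}$ are distinct elements of $\F_{q^m}^*\subseteq L$. By (iii), $M(1)\ne 0$, so no $x_j$ equals $1$, and hence $1$ is distinct from all the $x_j^{-1}$. Thus we have $t+1$ pairwise distinct elements of $L$, and $G$ does not vanish on $L$.

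Next I use the computation in the proof of Proposition~\ref{P-BCH}: for $1\le j\le t$ the quantity $R_j$ of \eqref{e-T-0-0} equals $-S_j$, where
$$
S_j=\frac{\prod_{k\ne j}(1-x_k)}{x_j\prod_{k\ne j}(x_j-x_k)}.
$$
If that "direct computation" is to be trusted only lightly, I would re-verify it by writing $F'(1)=\prod_k(1-x_k^{-1})$ and $F'(x_j^{-1})=(x_j^{-1}-1)\prod_{k\ne j}(x_j^{-1}-x_k^{-1})$ and clearing the powers of $x_j,x_k$; this is the only routine bookkeeping step. Now write $M(x)=c\prod_{k=1}^t(x-x_k)$ with $c\in\F_{q^m}^*$, which is possible by (i) since $\deg M=t$. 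Then
$$
M'(x_j)=c\prod_{k\ne j}(x_j-x_k), \qquad M(1)=c\,(1-x_j)\prod_{k\ne j}(1-x_k).
$$
Substituting these gives
$$
S_j=\frac{M(1)}{x_j(1-x_j)M'(x_j)}=-\frac{M(1)}{x_j(x_j-1)M'(x_j)}.
$$

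The key step is to evaluate condition (ii) at the root $x_j$ of $M$. This gives $x_j(x_j-1)M'(x_j)=w$. Hence $S_j=-M(1)/w$, and so $R_j=M(1)/w$ for every $j$. This lies in $\F_q^*$ because $M(1)\in\F_q^*$ by (iii) and $w\in\F_q^*$ by (ii).

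Theorem~\ref{T-0} then yields $d(\Gamma_q(L,x^t))=t+1$, and Lemma~\ref{L-2-1} gives the claim for $\bC_{(q,q^m-1,\delta,1)}$. The only potential obstacle is keeping the signs and the leading coefficient $c$ straight in the expression for $S_j$. Since $c$ cancels and the overall sign is irrelevant for membership in $\F_q^*$, I expect no real difficulty.
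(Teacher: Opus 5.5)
Your proposal is correct and follows essentially the same route as the paper: you use the computation $R_j=-S_j$ from Proposition~\ref{P-BCH}, rewrite $S_j=-M(1)/\bigl(x_j(x_j-1)M'(x_j)\bigr)$, and evaluate (ii) at each root $x_j$ to obtain $S_j=-M(1)/w\in\F_q^*$. Your extra checks make explicit two points the paper leaves implicit, since it tacitly takes $M$ monic and does not verify distinctness: that $1$ differs from every $x_j^{-1}$ because $M(1)\neq 0$, and that a non-monic leading coefficient $c$ cancels.
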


\begin{proof}
	\rm{
	Let $x_1,\ldots,x_t$ be the roots of $M(x)$. By Proposition \ref{P-BCH}, it suffices to show that
	$$
	S_j:=
	\frac{\prod_{k\ne j}(1-x_k)}
	{x_j\prod_{k\ne j}(x_j-x_k)}
	\in \F_q^*,
	\qquad 1\le j\le t.
	$$
	Since
	$$
	M(x)=\prod_{k=1}^t (x-x_k),
	$$
	we have
	$$
	M(1)=(1-x_j)\prod_{k\ne j}(1-x_k)
	\quad\text{and}\quad
	M'(x_j)=\prod_{k\ne j}(x_j-x_k).
	$$
	Hence
	$$
	S_j=
	\frac{M(1)}
	{x_j(1-x_j)M'(x_j)}
	=
	-\frac{M(1)}
	{x_j(x_j-1)M'(x_j)}.
	$$
	By (ii), evaluating the congruence at $x=x_j$ gives
	$$
	x_j(x_j-1)M'(x_j)=w.
	$$
	Since $M(1)\in \F_q^*$ and $w\in \F_q^*$, it follows that
	$S_j=-\frac{M(1)}{w}\in \F_q^*.$
	\qed
	}
\end{proof}

The next lemma gives a coefficient criterion for the congruence condition (ii) in Corollary~\ref{C-5-1}.
\begin{lemma}\label{L-5-coeff}
	Let
	$
	M(x)=x^t+a_{t-1}x^{t-1}+\cdots+a_1x+a_0\in \F_q[x]
	$
	be monic with $a_0\ne 0$, and let $w\in \F_q^*$. Then the following are equivalent
	\begin{enumerate}[label=(\roman*)]
		\item
		$x(x-1)M'(x)\equiv w \pmod{M(x)}$.
		\item There exists $b\in \F_q$ such that
		\begin{equation}\label{e-C-2-0}
			\begin{cases}
				(t-k+1)a_{k-1}+(k+b)a_k=0, & 2\le k\le t,\\[3pt]
				ta_0+(b+1)a_1=0,\\[3pt]
				w=-ba_0.
			\end{cases}
		\end{equation}
	\end{enumerate}
\end{lemma}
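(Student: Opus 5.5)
The plan is to turn the congruence in (i) into an exact polynomial identity with a quotient of degree at most one, and then compare coefficients. Since $M(x)$ is monic of degree $t$ and $x(x-1)M'(x)-w$ has degree at most $t+1$, condition (i) holds if and only if there is a polynomial $Q(x)\in\F_q[x]$ with $\deg Q\le 1$ and
$$
x(x-1)M'(x)-w=Q(x)M(x).
$$
The coefficient of $x^{t+1}$ on the left is $t\cdot 1=t$, reading $t$ in $\F_q$. Comparing leading coefficients therefore forces $Q(x)=tx+b$ for some $b\in\F_q$. This also covers the case $p\mid t$, where $Q$ is simply the constant $b$.

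So (i) is equivalent to the existence of $b\in\F_q$ with
$$
x(x-1)M'(x)=(tx+b)M(x)+w.
$$
I will write $a_t=1$ and expand both sides. On the left,
$$
x(x-1)M'(x)=\sum_{k=1}^{t}k a_k\,(x^{k+1}-x^k).
$$
Hence its coefficient of $x^k$ is $(k-1)a_{k-1}-k a_k$ for $1\le k\le t$, it is $t$ for $x^{t+1}$, and its constant term is $0$. On the right, the coefficient of $x^k$ is $t a_{k-1}+b a_k$ for $1\le k\le t$, the coefficient of $x^{t+1}$ is $t$, and the constant term is $b a_0+w$.

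Equating coefficients gives three groups of conditions. The constant terms give $w=-ba_0$. The $x^{t+1}$ terms agree automatically. For $1\le k\le t$ we get
$$
(t-k+1)a_{k-1}+(k+b)a_k=0.
$$
For $k=1$ the left-hand coefficient is $-a_1$, so this condition reads $ta_0+(b+1)a_1=0$. That is exactly the system \eqref{e-C-2-0}.

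Every step above is an equivalence. Hence (i) implies (ii) by taking $b$ from the quotient $Q(x)=tx+b$. Conversely, given $b$ satisfying (ii), the same coefficient comparison shows $x(x-1)M'(x)-w=(tx+b)M(x)$, which gives (i).

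The only point needing care is the reduction to $\deg Q\le 1$ with leading coefficient exactly $t$, including when $p\mid t$; the rest is bookkeeping of indices. The hypothesis $a_0\neq 0$ is not needed for the equivalence itself. It only guarantees that $b\neq 0$ whenever $w\neq 0$, consistent with $w\in\F_q^*$.
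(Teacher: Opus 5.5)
Your proposal is correct and takes the same route as the paper: rewrite the congruence as the identity $x(x-1)M'(x)=(tx+b)M(x)+w$, then compare coefficients. You also justify why the quotient must have the form $tx+b$, by comparing degrees and leading coefficients (covering the case $p\mid t$), which the paper asserts without comment.
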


\begin{proof}
\rm{
	Assume that {\rm (i)} holds. Then
	$
	x(x-1)M'(x)-w
	$
	is divisible by $M(x)$, it follows that
	$$
	x(x-1)M'(x)=(tx+b)M(x)+w
	$$
	for some $b\in \F_q$.
	
	Comparing the coefficients of $x^k$ for $2\le k\le t$, of $x$, and of the constant term, respectively, we obtain
	$$
	(t-k+1)a_{k-1}+(k+b)a_k=0,\qquad 2\le k\le t,
	$$
	$$
	ta_0+(b+1)a_1=0,
	\qquad
	w=-ba_0.
	$$
	Hence {\rm (ii)} holds.
	
	Conversely, assume that {\rm (ii)} holds. Then
	$$
	x(x-1)M'(x)=(tx+b)M(x)+w,
	$$
	and hence
	$$
	x(x-1)M'(x)\equiv w \pmod{M(x)}.
	$$
	So {\rm (i)} and {\rm (ii)} are equivalent.
	\qed
}
\end{proof}
\section{Goppa codes with $d=\delta$}\label{s4}

In this section, we apply Theorem \ref{T-0} to several families of Goppa codes and prove that their minimum distances equal the designed distance.

\subsection{Wild Goppa codes}

Throughout this subsection, assume that $m\ge 2$. We begin with the wild Goppa codes of Definition~\ref{D-wild}. The starting point is the following result of Couvreur et al.~\cite{2014wild}.

\begin{proposition}[\cite{2014wild}]\label{P-1}
	Let $g(x)\in \mathbb{F}_{q^m}[x]$ be a polynomial of degree $r$ with no roots in $\mathbb{F}_{q^m}$, and define
	$$
	N(g)(x)=N_{\mathbb{F}_{q^m}/\mathbb{F}_q}(g(x))
	=g(x)^{1+q+\cdots+q^{m-1}}.
	$$
	Then, for every support set $L\subseteq \mathbb{F}_{q^m}$,
	$$
	\Gamma_q(L,N(g))=\Gamma_q(L,N(g)/g).
	$$
	Consequently, this common code satisfies
	$$
	d \ge r\frac{q^m-1}{q-1}+1.
	$$
\end{proposition}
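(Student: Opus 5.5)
The plan is to argue through the evaluation (parity-check) description of Goppa codes and exploit that the norm takes values in the prime subfield. Write $h=N(g)/g=g^{q+\cdots+q^{m-1}}$. Since $h\mid N(g)$, the congruence $\sum c_i/(x-\alpha_i)\equiv 0 \pmod{N(g)}$ implies the one modulo $h$. Hence $\Gamma_q(L,N(g))\subseteq \Gamma_q(L,h)$ is immediate, and only the reverse inclusion needs work. As in the proof of Theorem~\ref{T-0}, a vector $\bc$ lies in $\Gamma_q(L,G)$ if and only if
\[
\sum_{i=1}^n c_i\,\frac{u(\alpha_i)}{G(\alpha_i)}=0\quad\text{for all } u\in\F_{q^m}[x] \text{ with } \deg u<\deg G .
\]
The key observation is that for $\alpha\in\F_{q^m}$ we have $N(g)(\alpha)=N_{\F_{q^m}/\F_q}(g(\alpha))\in\F_q^\ast$, because $g$ has no roots in $\F_{q^m}$. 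So set $a_i:=c_i/N(g)(\alpha_i)\in\F_q$. Membership in $\Gamma_q(L,N(g))$ then means that $\sum_i a_i v(\alpha_i)=0$ for all $v$ of degree less than $r\frac{q^m-1}{q-1}$. Membership in $\Gamma_q(L,h)$ means the same for all $v\in g\cdot\F_{q^m}[x]_{<\deg h}$, using $1/h(\alpha_i)=g(\alpha_i)/N(g)(\alpha_i)$.

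Let $V=\{v\in\F_{q^m}[x]:\sum_i a_iv(\alpha_i)=0\}$. This is an $\F_{q^m}$-subspace, and because every $a_i\in\F_q$ it is closed under $v\mapsto v^q$:
\[
\sum_i a_i v(\alpha_i)^q=\Bigl(\sum_i a_i v(\alpha_i)\Bigr)^q .
\]
Given $\bc\in\Gamma_q(L,h)$, I would split an arbitrary $u$ with $\deg u<\deg N(g)$ by division as $u=g\,s+b$ with $\deg b<r$. We can arrange $\deg s<\deg h$ since $\deg N(g)=\deg g+\deg h$. The term $g s$ lies in $V$ by hypothesis, so everything reduces to showing $b\in V$ for every $b$ of degree $<r$.

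That reduction is the main obstacle, and there I would use Frobenius closure together with the special shape of $h$. The first thing to try is the following. For $\deg b<r$, the polynomial $b^{q^{m}}$ agrees with $b^{(q^m)}$ on $\F_{q^m}$ (coefficientwise Frobenius), and $b^{(q^m)}=b$ as coefficients lie in $\F_{q^m}$. So it suffices to produce, for each such $b$, elements of $g\cdot\F_{q^m}[x]_{<\deg h}$ whose successive $q$-th powers and $\F_{q^m}$-combinations agree with $b$ on the support. Concretely, I would show that the span of $\{g\,s: \deg s<\deg h\}$ together with its images under $v\mapsto v^q$ (viewed as functions on $\F_{q^m}$) contains all polynomials of degree $<r$. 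The intended input is the factorization $g\cdot h=g\cdot g^q\cdots g^{q^{m-1}}$ evaluated on $\F_{q^m}$, where $g(\alpha)^{q^j}=g^{(q^j)}(\alpha^{q^j})$. Should this span argument become unwieldy, I would fall back on the rational-function formulation of Couvreur et al.: write $R(x)=\sum_i c_i/(x-\alpha_i)$ and compare $R$ with $R^q$ modulo $g^q$ (the generalization of the classical Sugiyama identity $\Gamma_q(L,g^{q-1})=\Gamma_q(L,g^q)$), iterating over the conjugate factors of the norm.

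Once the equality of codes is established, the distance bound is immediate from Lemma~\ref{L-goppa} applied to $N(g)$. Its degree is $r(1+q+\cdots+q^{m-1})=r\frac{q^m-1}{q-1}$, so the common code has $d\ge r\frac{q^m-1}{q-1}+1$.
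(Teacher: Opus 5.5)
There is a genuine gap, and it sits exactly at the step you flag as the main obstacle. The paper does not prove this proposition; it quotes it from \cite{2014wild}, so your argument has to stand on its own. Everything before that step is correct: the inclusion $\Gamma_q(L,N(g))\subseteq\Gamma_q(L,h)$, the evaluation description, the normalisation $a_i=c_i/N(g)(\alpha_i)\in\F_q$, the Frobenius stability of $V$, the division $u=gs+b$, and the final appeal to the Goppa bound.

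The claim you then propose to prove is that the $\F_{q^m}$-span of the $q^j$-th powers of $g\cdot\F_{q^m}[x]_{<\deg h}$ contains all functions of degree $<r$. Since $\bc$ ranges over all of $\F_q^n$, this is just a restatement of the reverse inclusion, and you give no argument for it. The remark that $b^{q^m}$ agrees with $b$ on $\F_{q^m}$ carries no information, since every function does.

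The Sugiyama-type fallback does not help either. Its mechanism, $R(x)^q=\sum_i c_i(x-\alpha_i)^{-q}$, yields only the local relations $t_{q(k+1)-1}=t_k^{q}$ between Taylor coefficients of $R$ at a zero of $g$. These reach only indices $\equiv -1 \pmod q$, whereas at a simple zero the missing coefficient is $t_e$ with $e=q+\cdots+q^{m-1}\equiv 0\pmod q$. Being local, these relations also cannot tell whether the zeros of $g$ lie in $\F_{q^m}$.

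That last point is decisive. Your reduction uses the root hypothesis only at points of $L$ and never uses $m\ge 2$, yet the identity fails without these hypotheses. For $m=1$ the right-hand side is $\Gamma_q(L,1)=\F_q^n$. For $q=m=2$, $g=x$, $L=\F_4^\ast$ (so $g\neq 0$ on $L$), the word $(1,1,1)$ lies in $\Gamma_2(L,x^2)$ but not in $\Gamma_2(L,x^3)$, because $\sum_{\alpha\in\F_4^\ast}\alpha^{-3}=1$. Any completion must use both hypotheses essentially.

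One way to supply the missing idea within your framework is to test $V$ on monomials, putting $Q_k=\sum_i a_i\alpha_i^k$.
\begin{itemize}
\item[(i)] The hypothesis $g\cdot\F_{q^m}[x]_{<\deg h}\subseteq V$ says that $(Q_k)_{0\le k<\deg N(g)}$ satisfies the linear recurrence with characteristic polynomial $g$. So if $g$ is squarefree with roots $\beta_1,\dots,\beta_r$, then $Q_k=\sum_\ell A_\ell\beta_\ell^k$ on this whole window.
\item[(ii)] Frobenius stability gives $Q_k^q=Q_{k'}$ whenever $k'\equiv qk\pmod{q^m-1}$, with $k,k'$ both positive or both zero.
\item[(iii)] If $2\le r\le q-1$, the integers all of whose base-$q$ digits $k_0,\dots,k_{m-1}$ lie in $\{0,\dots,r-1\}$ belong to the window and are permuted by the cyclic digit shift $k\mapsto k'$.
\item[(iv)] Compare $Q_{k'}$ with $Q_k^q$ for such $k$, with all digits other than $k_0,k_{m-1}$ equal to $0$. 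Using the linear independence of the vectors $\bigl((\beta_\ell^q)^{k_0}\bigr)_{0\le k_0\le r-1}$ (Vandermonde), you get $A_\ell\beta_\ell^{k_{m-1}}=A_\ell^q\beta_\ell^{q^mk_{m-1}}$ for $k_{m-1}\in\{0,1\}$. That is, $A_\ell\in\F_q$ and $A_\ell(\beta_\ell^{q^m}-\beta_\ell)=0$.
\item[(v)] Since $\beta_\ell\notin\F_{q^m}$, all $A_\ell=0$. This is where the global root condition enters, and $m\ge2$ is what keeps the digits $k_0$ and $k_{m-1}$ distinct. Hence $Q_0=\dots=Q_{r-1}=0$, which is exactly $b\in V$ for $\deg b<r$.
\end{itemize}
Repeated roots and $r\ge q$ need further work, or one simply quotes \cite{2014wild} as the paper does. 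As written, your proposal establishes only the trivial inclusion and the final distance bound.
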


In particular, for
$G(x)=N_{\mathbb F_{q^m}/\mathbb F_q}(g(x)),$
we have $\deg(G)=r\frac{q^m-1}{q-1}.$
Accordingly, throughout this subsection we put
$t=r\frac{q^m-1}{q-1}.$
Proposition \ref{P-1} gives the lower bound $d\ge t+1$. We now show that this lower bound is tight when $r>1$ and $r\mid q-1$.

The proof uses the following auxiliary polynomial.

\begin{lemma}\label{L-H}

	Assume that $r\mid q-1$, and let
	$$
	F_\gamma(x)=(x+\gamma)^{t+1}-(x+\gamma),
	\quad \gamma\in\mathbb F_{q^m}.
	$$
	Then $F_\gamma(x)$ has exactly $t+1$ distinct roots in $\mathbb{F}_{q^m}$. Moreover, for every root $\alpha$ of $F_\gamma(x)$, we have
	$F_\gamma'(\alpha)\in\mathbb F_q^\ast.$

\end{lemma}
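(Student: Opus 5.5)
Substituting $y=x+\gamma$ reduces the lemma to the polynomial $y^{t+1}-y = y\,(y^{t}-1)$. Translation by $\gamma$ is a bijection of $\F_{q^m}$, and the derivative transforms as $F_\gamma'(x)=P'(x+\gamma)$ with $P(y)=y^{t+1}-y$. So it suffices to show two things: $P$ has exactly $t+1$ distinct roots in $\F_{q^m}$, and $P'(\beta)\in\F_q^\ast$ for each root $\beta$.

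For the roots, write $N=\frac{q^m-1}{q-1}$, so $t=rN$. Since $r\mid q-1$, we have $t\mid q^m-1$, and $(q^m-1)/t=(q-1)/r$ is an integer. Because $\F_{q^m}^\ast$ is cyclic of order $q^m-1$, the polynomial $y^t-1$ splits into $t$ distinct linear factors over $\F_{q^m}$; its roots form the unique subgroup of order $t$. Together with the root $y=0$, which is not a root of $y^t-1$, this gives exactly $t+1$ distinct roots of $P$ in $\F_{q^m}$. Alternatively, distinctness follows from $\gcd(P,P')=1$ once $P'$ is shown nonzero at every root.

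For the derivative, $P'(y)=(t+1)y^t-1$, and I would evaluate it in two cases.
\begin{itemize}
\item At $\beta=0$: $P'(0)=-1\in\F_q^\ast$.
\item At $\beta\neq 0$: here $\beta^t=1$, so $P'(\beta)=(t+1)-1=t$, viewed as an element of the prime field.
\end{itemize}
It then remains to check that $t\not\equiv 0$ modulo the characteristic $p$. We have $N=1+q+\cdots+q^{m-1}\equiv 1 \pmod p$, and $r\mid q-1$ gives $r\le q-1$ with $p\nmid r$, since $q-1\equiv -1\pmod p$ and so no divisor of $q-1$ is divisible by $p$. Hence $t=rN\equiv r\not\equiv 0 \pmod p$, and $P'(\beta)=t\in\F_p^\ast\subseteq\F_q^\ast$. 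The same fact shows $p\nmid t+1$ is not needed here. Translating back by $\gamma$ gives the statement for $F_\gamma$.

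The only real subtlety is the characteristic check $p\nmid t$, which is where the hypothesis $r\mid q-1$ enters, alongside $t\mid q^m-1$. Everything else is routine.
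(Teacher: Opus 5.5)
Your proposal is correct and is essentially the paper's proof. The roots are $-\gamma$ and the translates of the $t$-th roots of unity (using $t\mid q^m-1$), the derivative takes the values $-1$ and $t$ at these roots, and $t\equiv r\not\equiv 0 \pmod p$ because $\frac{q^m-1}{q-1}\equiv 1\pmod p$ and $p\nmid r$; your substitution $y=x+\gamma$ is only a notational convenience.
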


\begin{proof}
	\rm{
		Let $p$ be the characteristic of $\mathbb F_q$. 
		Since $t=r\frac{q^m-1}{q-1}$ and $r\mid q-1$, it follows that $t\mid q^m-1$. Hence the equation $y^t=1$ has exactly $t$ distinct solutions in $\mathbb F_{q^m}^\ast$. The roots of $F_\gamma(x)$ are precisely the solutions of
		$$
		x+\gamma=0
		\quad\text{or}\quad
		(x+\gamma)^t=1,
		$$
		and thus $F_\gamma(x)$ has exactly $t+1$ distinct roots in $\mathbb F_{q^m}$.
		
		Next,
		$$
		F_\gamma'(x)=(t+1)(x+\gamma)^t-1.
		$$
		If $\alpha=-\gamma$, then $F_\gamma'(\alpha)=-1\in\mathbb F_q^\ast$. 
		If $\alpha\neq-\gamma$ is a root, then $(\alpha+\gamma)^t=1$, and hence $F_\gamma'(\alpha)=t$.
		
		Note that
		$$
		\frac{q^m-1}{q-1}=1+q+\cdots+q^{m-1}\equiv 1 \pmod{p}.
		$$
		Thus $\frac{q^m-1}{q-1}\in\mathbb F_q^\ast$. Since $r\mid q-1$, we have $p\nmid r$, and therefore
		$
		t=r\frac{q^m-1}{q-1}\in\mathbb F_q^\ast.
		$
		This completes the proof.
		\qed
	}
\end{proof}
\begin{theorem}\label{T-4-2}
	Let $q$ be a prime power. Let $g(x)\in \mathbb F_{q^m}[x]$ be of degree $r>1$ with no roots in $\mathbb F_{q^m}$, and set
	$G(x)=N_{\mathbb F_{q^m}/\mathbb F_q}(g(x))$.
	If $r\mid q-1$ and $L=\mathbb F_{q^m}$, then the wild Goppa code $\Gamma_q(L,G)$ has minimum distance
	$d=t+1=r\frac{q^m-1}{q-1}+1.$
	
\end{theorem}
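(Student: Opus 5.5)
The plan is to combine the lower bound from Proposition~\ref{P-1} with an explicit codeword of weight $t+1$, whose existence we certify through the criterion of Theorem~\ref{T-0}. Since $g$ has no roots in $\F_{q^m}$, the polynomial $G$ has no roots in $L=\F_{q^m}$, so the code is well defined. Proposition~\ref{P-1} then gives $d\ge t+1$ with $t=r\frac{q^m-1}{q-1}$. It therefore suffices to find $t+1$ distinct elements of $L$ satisfying \eqref{e-T-0-0}. Theorem~\ref{T-0} is stated for the designed distance $\deg(G)+1$, and $\deg(G)$ is exactly $t$ here, so the theorem applies verbatim with this $t$.

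For the support I will take the roots of the auxiliary polynomial of Lemma~\ref{L-H}, with, say, $\gamma=0$:
\[
F(x)=F_0(x)=x^{t+1}-x .
\]
Since $r\mid q-1$, we have $t\le q^m-1$, and Lemma~\ref{L-H} shows that $F$ has exactly $t+1$ distinct roots $\alpha_{i_1},\ldots,\alpha_{i_{t+1}}$ in $\F_{q^m}=L$. These are $0$ together with the $t$-th roots of unity. The lemma also shows that $F'(\alpha_{i_\ell})\in\F_q^\ast$ for every $\ell$. Consequently every ratio $F'(\alpha_{i_{t+1}})/F'(\alpha_{i_j})$ lies in $\F_q^\ast$.

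It remains to control the factor $G(\alpha_{i_j})/G(\alpha_{i_{t+1}})$ in $R_j$. The key observation is that for any $\alpha\in\F_{q^m}$ we have
\[
G(\alpha)=g(\alpha)^{\frac{q^m-1}{q-1}}=N_{\F_{q^m}/\F_q}(g(\alpha)),
\]
because $g(\alpha)\in\F_{q^m}$. Since $g(\alpha)\neq 0$, this norm lies in $\F_q^\ast$. Thus every value $G(\alpha_{i_\ell})$ is in $\F_q^\ast$, and so is every ratio of such values. Multiplying the two factors gives $R_j\in\F_q^\ast$ for $1\le j\le t$. Theorem~\ref{T-0} then yields a codeword of weight $t+1$, hence $d=t+1$.

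The main point requiring care is the norm identity: $G(\alpha)$ is the norm of $g(\alpha)$ only when $\alpha$ itself lies in $\F_{q^m}$, which is guaranteed here because $L=\F_{q^m}$. The rest consists of verifying the hypotheses of Lemma~\ref{L-H}, namely $t\mid q^m-1$ and $t\not\equiv 0\pmod p$, which that lemma already handles. The hypothesis $r>1$ does not seem to be needed for this argument. I would only double-check whether it is used to make Proposition~\ref{P-1} give a bound strictly beyond the plain Goppa bound, and otherwise leave it as a standing assumption.
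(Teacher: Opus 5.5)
Your proposal is correct and is essentially the paper's proof: the lower bound from Proposition~\ref{P-1}, a support given by the roots of $F_\gamma$ from Lemma~\ref{L-H} (you fix $\gamma=0$, whereas the paper allows any $\gamma\in\F_{q^m}$), the observation that $G(\alpha)=N_{\F_{q^m}/\F_q}(g(\alpha))\in\F_q^\ast$ for $\alpha\in\F_{q^m}$, and then Theorem~\ref{T-0}. You are also right that $r>1$ plays no role in the argument; it is in fact automatic, since a polynomial of degree one over $\F_{q^m}$ always has a root in $\F_{q^m}$.
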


\begin{proof}
\rm{
	By Proposition \ref{P-1}, the code $\Gamma_q(L,G)$ has minimum distance at least $t+1$. It remains to show that it contains a codeword of weight $t+1$.
	
	Choose any $\gamma\in\mathbb F_{q^m}$, and let
	$\{\alpha_{i_1},\alpha_{i_2},\ldots,\alpha_{i_{t+1}}\}$
	be the set of roots of
	$F_\gamma(x)=(x+\gamma)^{t+1}-(x+\gamma)$
	from Lemma \ref{L-H}. Since $L=\mathbb F_{q^m}$, all these elements belong to the support set. We apply Theorem \ref{T-0} with
	$$
	F(x)=F_\gamma(x)=\prod_{\ell=1}^{t+1}(x-\alpha_{i_\ell}).
	$$
	
	For every $\alpha\in\mathbb F_{q^m}$, we obtain
	$$
	G(\alpha)=g(\alpha)^{1+q+\cdots+q^{m-1}}\in\mathbb F_q.
	$$
	Because $g$ has no roots in $\mathbb F_{q^m}$, it follows that
	$G(\alpha)\in\mathbb F_q^\ast$ for all $\alpha\in\mathbb F_{q^m}.$
	Hence
	\begin{equation}\label{e-T-2-0}
		\frac{G(\alpha_{i_j})}{G(\alpha_{i_{t+1}})}\in\mathbb F_q^\ast,
		\quad 1\le j\le t.
	\end{equation}
	
	On the other hand, Lemma \ref{L-H} gives
	$F'(\alpha_{i_j}) \in\mathbb F_q^\ast$ for $1\le j\le t+1$
	and therefore
	\begin{equation}\label{e-T-2-1}
		\frac{F'(\alpha_{i_{t+1}})}{F'(\alpha_{i_j})}\in\mathbb F_q^\ast,
		\quad 1\le j\le t.
	\end{equation}

	Combining \eqref{e-T-2-0} and \eqref{e-T-2-1}, we obtain
	$$
	\frac{G(\alpha_{i_j})}{G(\alpha_{i_{t+1}})}
	\cdot
	\frac{F'(\alpha_{i_{t+1}})}{F'(\alpha_{i_j})}
	\in\mathbb F_q^\ast,
	\quad 1\le j\le t.
	$$
	Thus the condition \eqref{e-T-0-0} in Theorem \ref{T-0} is satisfied. Hence $\Gamma_q(L,G)$ contains a codeword of weight $t+1$, and consequently
	$d=t+1=r\frac{q^m-1}{q-1}+1.$
	This completes the proof. 
\qed
}
\end{proof}

\begin{remark}
\rm{
	The case $r=1$ is excluded here, since Proposition \ref{P-1} requires that $g$ has no roots in $\mathbb F_{q^m}$. The boundary case $r=1$ will be treated separately in Section \ref{s5}.
	
}
\end{remark}

\begin{example}
	\rm{
		Examples of the codes in Theorem \ref{T-4-2} are given in Table \ref{tab-wild}. 
		All numerical
		examples were produced using SageMath.
	}
\end{example}

\begin{longtable}{|l|l|l|l|l|}
	\caption{\label{tab-wild} Codes in Theorem \ref{T-4-2}}\\ 
	\hline 
	$q$ & $m$ & $r$ & $\Gamma_q(L,N(g))$ & $d_{best}$ \\  
	\hline
	\endfirsthead
	
	\hline
	$q$ & $m$ & $r$ & $\Gamma_q(L,N(g))$ & $d_{best}$ \\  
	\hline
	\endhead
	
	\hline
	\endfoot
	
	5 & 2 & 2 & $[25,9,13]_5$  &  13 \\ \hline

	\multirow{2}{*}{7} 
	& \multirow{2}{*}{2} & 2 & $[49,25,17]_7$   &  17 \\ \cline{3-5}
	&  & 3 & $[49,16,25]_7$  &  25 \\ \hline
	
	\multirow{2}{*}{9} 
	& \multirow{2}{*}{2} & 2 & $[81,49,21]_9$   &  21 \\ \cline{3-5}
	&  & 4 & $[81,25,41]_9$  &  41 \\ \hline

\end{longtable}

\subsection{Two explicit families of  Goppa codes with $d=\delta$}

In \cite{1995BS}, Bezzateev and Shekhunova proved that the binary Goppa codes with polynomial $G(x)=x^t+A$ have minimum distance $2t+1$ when $t\mid2^m-1$ and $A$ is a $t$-th power in $\F_{2^m}^\ast$. We generalize this result to $q$-ary Goppa codes  when $q$ is an odd prime power. 

\begin{theorem}\label{T-4-3}
	Let $q$ be a power of an odd prime $p$, and $m$ and $t$ be positive integers such that $t\mid q^m-1$. Assume that $A$ is a $t$-th power in $\mathbb F_{q^m}^\ast$. Let $G(x)=x^t+A$ and $L=\{\alpha\in\mathbb F_{q^m}:G(\alpha)\neq 0\}$. Then the $q$-ary Goppa code $\Gamma_q(L,G)$ has minimum distance $d=t+1$.
\end{theorem}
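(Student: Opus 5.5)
The plan is to apply Theorem~\ref{T-0} directly, exhibiting $t+1$ explicit support elements whose ratios $G(\alpha)/F'(\alpha)$ all lie in the prime field $\F_p^\ast\subseteq\F_q^\ast$. Since condition \eqref{e-T-0-0} can be rewritten as
$$
R_j=\frac{G(\alpha_{i_j})/F'(\alpha_{i_j})}{G(\alpha_{i_{t+1}})/F'(\alpha_{i_{t+1}})},
$$
it suffices to find a set of $t+1$ points of $L$ on which $G(\alpha)/F'(\alpha)$ takes values in $\F_q^\ast$. Taking ratios then gives \eqref{e-T-0-0}. Together with the Goppa bound $d\ge t+1$ of Lemma~\ref{L-goppa}, this yields $d=t+1$.

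\textbf{Choice of points.} Write $A=a^t$ with $a\in\F_{q^m}^\ast$. Let
$$
F(x)=x(x^t-A)=x^{t+1}-Ax.
$$
Since $t\mid q^m-1$, the equation $\zeta^t=1$ has exactly $t$ solutions in $\F_{q^m}^\ast$. Hence $F$ has the $t+1$ distinct roots $0$ and $a\zeta$ with $\zeta^t=1$, all lying in $\F_{q^m}$.

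\textbf{Membership in $L$.} We check that these roots lie in $L$:
\begin{itemize}
\item $G(0)=A\neq 0$;
\item $G(a\zeta)=a^t\zeta^t+A=2A\neq 0$, because $p$ is odd.
\end{itemize}
This is exactly where the hypothesis that $q$ is odd enters: in characteristic $2$ the points $a\zeta$ would be roots of $G$.

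\textbf{Derivative values.} We have $F'(x)=(t+1)x^t-A$, so:
\begin{itemize}
\item $F'(0)=-A$;
\item $F'(a\zeta)=(t+1)A-A=tA$.
\end{itemize}
Since $t\mid q^m-1$, we have $p\nmid t$, so $t\neq 0$ in $\F_p$ and $F'(a\zeta)\neq 0$. This confirms the roots are simple, consistent with their distinctness.

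\textbf{Verifying \eqref{e-T-0-0}.} The ratios are
$$
\frac{G(0)}{F'(0)}=-1,\qquad \frac{G(a\zeta)}{F'(a\zeta)}=\frac{2A}{tA}=\frac{2}{t},
$$
both of which lie in $\F_p^\ast$. Set $\alpha_{i_{t+1}}=0$ and let $\alpha_{i_1},\ldots,\alpha_{i_t}$ be the points $a\zeta$. Then every $R_j$ equals $-2/t\in\F_q^\ast$, and Theorem~\ref{T-0} gives a codeword of weight $t+1$.

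The only real obstacle is finding a root configuration for which $F'$ is proportional to $G$ on the roots. The choice $F=x\,(x^t-A)$ is natural because $x^t$ is constant on the coset $a\mu_t$, which makes both $G$ and $F'$ constant there. The remaining checks, namely $2\neq 0$ and $t\neq 0$ in $\F_q$, are routine.
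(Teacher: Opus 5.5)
Your proposal is correct and follows the paper's proof essentially verbatim. You use the same auxiliary polynomial $F(x)=x(x^t-A)$, with roots $0$ and the $t$ solutions of $x^t=A$, and the same membership check $G(0)=A$, $G(\alpha)=2A\neq 0$ (using that $p$ is odd); with $t\in\F_q^\ast$ because $p\nmid q^m-1$, this gives the same ratio $R_j=-2/t\in\F_q^\ast$, which you feed into Theorem~\ref{T-0}.
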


\begin{proof}
\rm{
	Since $t\mid q^m-1$ and $A$ is a $t$-th power in $\mathbb F_{q^m}^\ast$, the equation $x^t=A$ has exactly $t$ distinct solutions in $\mathbb F_{q^m}^\ast$. Let $\alpha_{i_1},\ldots,\alpha_{i_t}$ be these solutions, and set $\alpha_{i_{t+1}}=0$. Define $$F(x)=x(x^t-A).$$ 
	Then $F(x)$ has exactly the $t+1$ distinct roots $\alpha_{i_1},\ldots,\alpha_{i_t},\alpha_{i_{t+1}}$.
	
	It remains to verify that these roots belong to $L$. Clearly, $G(0)=A\neq 0$. For $1\le j\le t$, the relation $\alpha_{i_j}^t=A$ gives $G(\alpha_{i_j})=2A$. As $q$ is odd, $2\in\mathbb F_q^\ast$, and hence $G(\alpha_{i_j})\neq 0$. Thus all roots of $F(x)$ belong to $L$.
	
	Since $F(x)=\prod_{\ell=1}^{t+1}(x-\alpha_{i_\ell})$, we have $F'(x)=(t+1)x^t-A$. Hence $F'(0)=-A$ and $F'(\alpha_{i_j})=tA$ for $1\le j\le t$. Therefore, for each $1\le j\le t$,
	$$
	\frac{G(\alpha_{i_j})}{G(\alpha_{i_{t+1}})}
	\cdot
	\frac{F'(\alpha_{i_{t+1}})}{F'(\alpha_{i_j})}
	=
	\frac{2A}{A}\cdot\frac{-A}{tA}
	=
	-\frac{2}{t}.
	$$
	Since $p\nmid q^m-1$ and $t\mid q^m-1$, we  have $t\in\mathbb F_q^\ast$. Hence $-\frac{2}{t}\in\mathbb F_q^\ast$. The condition \eqref{e-T-0-0} in Theorem~\ref{T-0} is satisfied. Therefore $\Gamma_q(L,G)$ contains a codeword of weight $t+1$, and thus $d=t+1$.\qed
}
\end{proof}

\begin{example}
	\rm{
		Examples of the codes in Theorem \ref{T-4-3} are listed in Table \ref{tab-q-goppa}.
		All numerical results were generated using SageMath.
		When $t$ is small, the minimum distances of the codes $\Gamma_q(L,G)$ are often close to the best known values recorded in the codetables \cite{codetable}.
	}
\end{example}

\begin{longtable}{|c|c|c|c|c|}
	\caption{\label{tab-q-goppa} Codes in Theorem \ref{T-4-3}}\\
	\hline
	$q$ & $m$ & $t$ & $\Gamma_q(L,G)$ & $d_{\mathrm{best}}$ \\
	\hline
	\endfirsthead
	
	\hline
	$q$ & $m$ & $t$ & $\Gamma_q(L,G)$ & $d_{\mathrm{best}}$ \\
	\hline
	\endhead
	
	\hline
	\endfoot
	
	\multirow{5}{*}{3}
	& 2 & 2  & $[7,3,3]_3$    & 4  \\ \cline{2-5}
	& 3 & 2  & $[27,21,3]_3$  & 4  \\ \cline{2-5}
	& \multirow{3}{*}{4} 
	& 4  & $[77,61,5]_3$  & 8  \\ \cline{3-5}
	&     & 8  & $[73,42,9]_3$  & 14 \\ \cline{3-5}
	&     & 10 & $[71,37,11]_3$ & 15 \\ \hline
	
	\multirow{6}{*}{5}
	& \multirow{4}{*}{2}
	& 2 & $[23,19,3]_5$  & 4  \\ \cline{3-5}
	&  & 4  & $[21,13,5]_5$ & 6 \\ \cline{3-5}
	&     & 6 & $[19,10,7]_5$  & 7  \\ \cline{3-5}
	&     & 8 & $[25,10,9]_5$  & 12 \\ \cline{2-5}
	
	& \multirow{2}{*}{3}
	& 2 & $[123,117,3]_5$  & 4  \\ \cline{3-5}
	&     & 4 & $[125,113,5]_5$  & 6 \\ \hline
	
	\multirow{5}{*}{7}
	& \multirow{5}{*}{2}
	& 2  & $[47,43,3]_7$  & 4  \\ \cline{3-5}
	&     & 3  & $[49,43,4]_7$  & 4  \\ \cline{3-5}
	&     & 4  & $[45,37,5]_7$  & 6  \\ \cline{3-5}
	&     & 8  & $[41,28,9]_7$  & 9  \\ \cline{3-5}
	&     & 12 & $[37,14,13]_7$ & 16 \\ \hline
	
	\multirow{4}{*}{9}
	& \multirow{4}{*}{2}
	& 2  & $[79,75,3]_9$  & 4  \\ \cline{3-5}
	&     & 4  & $[77,69,5]_9$  & 6  \\ \cline{3-5}
	&     & 8  & $[73,57,9]_9$  & 10  \\ \cline{3-5}
	&     & 10  & $[71,54,11]_9$  & 11  \\ \hline
\end{longtable}

We now present the second family obtained from a fractional linear parametrization of the roots.

\begin{theorem}\label{T-4-4}
	Let $q$ be a power of a prime $p$, and $m,t$ be positive integers such that $t+1\mid(q^m-1)$ and $p\nmid t+1$. Assume that  $u,v\in\mathbb F_{q^m}$ with $u\neq v$ and $\lambda\in\mathbb F_{q^m}^\ast\setminus\{1\}$ is a $(t+1)$-th power. Let $G(x)=(x+u)^t-\lambda(x+v)^t$ and
	$L=\{\alpha\in\mathbb F_{q^m}:G(\alpha)\neq 0\}.$
	Then the Goppa code $\Gamma_q(L,G)$ has minimum distance $d=t+1$.
\end{theorem}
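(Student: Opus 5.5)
The plan is to reduce the statement to Corollary~\ref{C-derivative}. The Goppa polynomial $G$ is, up to the factor $t+1$, the derivative of
$$
F(x)=(x+u)^{t+1}-\lambda(x+v)^{t+1}.
$$
Indeed,
$$
F'(x)=(t+1)\bigl[(x+u)^t-\lambda(x+v)^t\bigr]=(t+1)G(x),
$$
so $G=(t+1)^{-1}F'$. This is legitimate because $p\nmid t+1$.

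The first step is to check that $F$ has degree $t+1$ and leading coefficient $1-\lambda$. Since $\lambda\neq 1$, this coefficient is nonzero, so $F$ really has degree $t+1$ (and is a scalar multiple of a monic polynomial, which does not affect Corollary~\ref{C-derivative}, since only the roots and the ratio $G=(t+1)^{-1}F'$ matter; alternatively divide $F$ by $1-\lambda$, which rescales $G$ by a constant and leaves the code unchanged).

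The second step, which I expect to be the only real work, is to show that $F$ has $t+1$ distinct roots in $\F_{q^m}$.
\begin{itemize}
\item First, $x=-v$ is not a root: it would force $(u-v)^{t+1}=0$, contradicting $u\neq v$.
\item So every root satisfies $y^{t+1}=\lambda$, where $y=(x+u)/(x+v)$.
\item Write $\lambda=\mu^{t+1}$ with $\mu\in\F_{q^m}^\ast$. Because $t+1\mid q^m-1$, the equation $y^{t+1}=\lambda$ has exactly $t+1$ distinct solutions $y=\mu\zeta$, where $\zeta$ runs over the $(t+1)$-th roots of unity in $\F_{q^m}$.
\item None of these solutions equals $1$, since $\lambda\neq 1$.
\item For each such $y$, solving $y(x+v)=x+u$ gives the unique root $x=(u-yv)/(y-1)\in\F_{q^m}$.
\item The Möbius map $x\mapsto (x+u)/(x+v)$ is injective because $u\neq v$. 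Hence distinct $y$ give distinct $x$.
\end{itemize}
This produces $t+1$ distinct roots $\alpha_{i_1},\dots,\alpha_{i_{t+1}}\in\F_{q^m}$, which, since $\deg F=t+1$, are all the roots of $F$.

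Finally, Corollary~\ref{C-derivative} gives $\deg G=t$ and $G(\alpha_{i_j})\neq 0$ for all $j$. Therefore all these roots lie in $L=\{\alpha\in\F_{q^m}:G(\alpha)\neq0\}$. Applying Corollary~\ref{C-derivative} with this $L$ yields $d=t+1$.

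Equivalently, one can verify condition~\eqref{e-T-0-0} of Theorem~\ref{T-0} directly: since $F'=(t+1)G$, each ratio $R_j$ equals $1\in\F_q^\ast$.
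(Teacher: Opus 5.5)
Your proposal is correct and follows essentially the same route as the paper: set $F(x)=(x+u)^{t+1}-\lambda(x+v)^{t+1}$, count its $t+1$ distinct roots via the substitution $y=(x+u)/(x+v)$, observe $F'=(t+1)G$, and invoke Corollary~\ref{C-derivative}. You are slightly more careful than the paper on two points it glosses over: you check that the leading coefficient $1-\lambda$ of $F$ is nonzero, and you note that $y\neq 1$ ensures each solution $y$ gives a genuine root $x=(u-yv)/(y-1)\in\F_{q^m}$.
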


\begin{proof}
\rm{
	Choose $\eta\in\mathbb F_{q^m}^\ast$ such that $\eta^{t+1}=\lambda$, and define
	$$
	F(x)=(x+u)^{t+1}-\lambda(x+v)^{t+1}.
	$$
	
	Since $u\neq v$, we have $F(-v)=(u-v)^{t+1}\neq 0$, so every root $x$ of $F(x)$ satisfies $x\neq -v$. For such $x$, set
	$$
	y=\frac{x+u}{x+v}.
	$$
	Then
	$$
	F(x)=0 \iff y^{t+1}=\lambda.
	$$
	
	Since $t+1\mid(q^m-1)$ and $\lambda$ is a $(t+1)$-th power in $\mathbb F_{q^m}^\ast$, the equation $y^{t+1}=\lambda$ has exactly $t+1$ solutions. The fractional linear transformation $x\mapsto \frac{x+u}{x+v}$ is invertible, hence $F(x)$ has exactly $t+1$ distinct roots in $\mathbb F_{q^m}$.
	
	Moreover,
	$$
	F'(x)=(t+1)\big((x+u)^t-\lambda(x+v)^t\big)=(t+1)G(x),
	$$
	so $G(x)=(t+1)^{-1}F'(x)$. Since $p\nmid t+1$, every root $\alpha$ of $F(x)$ satisfies $G(\alpha)\neq 0$, and hence all roots lie in the support set $L$.
	Therefore, by Corollary \ref{C-derivative}, the code $\Gamma_q(L,G)$ has minimum distance
	$d=t+1.$
\qed}
\end{proof}

\begin{example}
	\rm{
		Examples of the codes in Theorem \ref{T-4-4} are given in Table \ref{tab-T-4-4}. 
		All numerical
		examples were produced using SageMath.
	}
\end{example}

\begin{longtable}{|c|c|c|c|c|}
	\caption{\label{tab-T-4-4} Codes in Theorem \ref{T-4-4}}\\
	\hline
	$q$ & $m$ & $t$ & $\Gamma_q(L,G)$ & $d_{\mathrm{best}}$ \\
	\hline
	\endfirsthead
	
	\hline
	$q$ & $m$ & $t$ & $\Gamma_q(L,G)$ & $d_{\mathrm{best}}$ \\
	\hline
	\endhead
	
	\hline
	\endfoot
	
	\multirow{8}{*}{3}
	& \multirow{2}{*}{2}
	& 1  & $[8,6,2]_3$    & 2  \\ \cline{3-5}
	&     & 3  & $[8,4,4]_3$    & 4  \\ \cline{2-5}
	& \multirow{2}{*}{3}
	& 1  & $[26,23,2]_3$   & 2  \\ \cline{3-5}
	&     & 12 & $[27,4,13]_3$   & 18 \\ \cline{2-5}
	& \multirow{4}{*}{4}
	& 1  & $[80,76,2]_3$   & 2  \\ \cline{3-5}
	&     & 3  & $[80,72,4]_3$   & 4  \\ \cline{3-5}
	&     & 4  & $[81,65,5]_3$   & 8  \\ \cline{3-5}
	&     & 7  & $[80,52,8]_3$   & 12 \\ \hline
	
	\multirow{4}{*}{4}
	& \multirow{2}{*}{2}
	& 2  & $[15,11,3]_4$   & 4  \\ \cline{3-5}
	&     & 4  & $[15,9,5]_4$    & 5  \\ \cline{2-5}
	& \multirow{2}{*}{3}
	& 2  & $[63,57,3]_4$   & 4  \\ \cline{3-5}
	&     & 6 & $[64,46,7]_4$  & 9 \\ \hline
	
	\multirow{7}{*}{5}
	& \multirow{4}{*}{2}
	& 1  & $[24,22,2]_5$   & 2  \\ \cline{3-5}
	&     & 3  & $[25,19,4]_5$   & 5  \\ \cline{3-5}
	&     & 5  & $[24,16,6]_5$   & 8  \\ \cline{3-5}
	&     & 7  & $[24,10,8]_5$   & 11 \\ \cline{2-5}
	& \multirow{3}{*}{3}
	& 1  & $[124,121,2]_5$ & 2  \\ \cline{3-5}
	&     & 3  & $[124,115,4]_5$ & 5  \\ \cline{3-5}
	&     & 30 & $[125,53,31]_5$ & 38 \\ \hline

	\multirow{5}{*}{7}
	& \multirow{5}{*}{2}
	& 1 & $[48,46,2]_7$  & 2 \\ \cline{3-5}
	&  & 2 & $[49,45,3]_7$  & 3 \\ \cline{3-5}
	&  & 3 & $[49,43,4]_7$  & 4 \\ \cline{3-5}
	&  & 5 & $[48,38,6]_7$  & 6 \\ \cline{3-5}
	&  & 7 & $[48,36,8]_7$  & 8 \\ \hline
	
	\multirow{2}{*}{8}
	& \multirow{2}{*}{2}
	& 2 & $[63,59,3]_8$ & 4 \\ \cline{3-5}
	&  & 6 & $[64,52,7]_8$ & 8 \\ \hline
	
	\multirow{4}{*}{9}
	& \multirow{4}{*}{2}
	& 1 & $[80,78,2]_9$ & 2 \\ \cline{3-5}
	&  & 3 & $[80,74,4]_9$ & 4 \\ \cline{3-5}
	&  & 4 & $[81,73,5]_9$ & 5 \\ \cline{3-5}
	&  & 7 & $[80,66,8]_9$ & 8 \\ \hline
\end{longtable}

\section{BCH codes with $d=\delta$}\label{s5}

In this section, we apply the results in Section \ref{s3} to several families of primitive BCH codes and prove that their minimum distances equal the designed distance.

\subsection{Binary BCH codes with $\delta \in \{9,15\}$}

\begin{theorem}\label{T-q=2}
	Let $m \ge 8$. The binary BCH code $\bC_{(2,2^m-1,\delta,1)}$ has the following parameters:
	\begin{itemize}
		\item For any $8 \mid m$, the BCH code $\bC_{(2,2^m-1,9,1)}$ has parameters $[2^m-1,2^m-4m-1,9]_2$.
		\item For any $14 \mid m$, the BCH code $\bC_{(2,2^m-1,15,1)}$ has parameters $[2^m-1,2^m-7m-1,15]_2$.
	\end{itemize}
\end{theorem}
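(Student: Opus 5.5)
The dimension and the distance are handled separately. For the dimension I will use Lemma~\ref{L-dim1} with $q=2$ and $n=2^m-1$, so that $m=\ord_n(2)$ and $n>2^{\lceil m/2\rceil}$. The admissible range is $\delta\le\min\{\lfloor n2^{\lceil m/2\rceil}/n\rfloor,n\}=2^{\lceil m/2\rceil}$. This contains $\delta=9$ whenever $m\ge 8$, and $\delta=15$ whenever $m\ge 14$, which holds since $14\mid m$. The formula then gives $k=n-m\lceil (\delta-1)/2\rceil$, which equals $2^m-1-4m$ for $\delta=9$ and $2^m-1-7m$ for $\delta=15$. For the distance, the BCH bound gives $d\ge\delta$, so it suffices to verify the hypotheses of Corollary~\ref{C-5-1} for a suitable $M(x)$ of degree $t=\delta-1\in\{8,14\}$.

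I will look for $M(x)\in\F_2[x]$ monic with $a_0=1$ and take $w=1$. Lemma~\ref{L-5-coeff} then forces $b=-w/a_0=1$, and the conditions in \eqref{e-C-2-0} become, modulo $2$:
\begin{itemize}
\item the relation $t a_0+(b+1)a_1=0$, which is automatic because $t$ is even;
\item the relations $(t-k+1)a_{k-1}+(k+1)a_k=0$ for $2\le k\le t$.
\end{itemize}
For odd $k$ both coefficients $t-k+1$ and $k+1$ are even, so these relations are vacuous. For even $k$ both coefficients are odd, so the relation says $a_{k-1}=a_k$. Consequently the admissible $M$ are exactly
$$M(x)=1+\sum_{i=1}^{t/2} a_{2i}\,(x^{2i-1}+x^{2i}),\qquad a_t=1,$$
with $a_2,a_4,\ldots,a_{t-2}\in\F_2$ free. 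Condition (iii) is automatic: $M(1)=1+\sum_i 2a_{2i}=1\in\F_2^*$. The roots are nonzero because $a_0=1$. Condition (ii) holds by construction.

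It remains to arrange condition (i), namely that $M$ is squarefree and splits into $t$ distinct roots in $\F_{2^m}$. This holds as soon as $M$ is squarefree and every irreducible factor of $M$ has degree dividing $8$ (respectively dividing $14$). Then all roots lie in $\F_{2^8}\subseteq\F_{2^m}$ when $8\mid m$ (respectively in $\F_{2^{14}}\subseteq\F_{2^m}$ when $14\mid m$). The search space is finite: $2^3=8$ candidates for $t=8$ and $2^6=64$ candidates for $t=14$. I will factor them over $\F_2$ (e.g.\ in SageMath) and exhibit one suitable polynomial for each $t$. For $t=8$ I expect an irreducible octic among the candidates. For $t=14$ I expect either an irreducible polynomial of degree $14$ or a product of distinct irreducible factors of degrees in $\{1,2,7,14\}$.

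The main obstacle is exactly this explicit exhibition and its verification. For example, the all-ones candidate for $t=8$ is $(x^9-1)/(x-1)=(x^2+x+1)(x^6+x^3+1)$. Its factor of degree $6$ does not divide $8$, so it must be rejected, and the choice genuinely matters. Once suitable polynomials are fixed, I will confirm irreducibility or the factorization pattern directly, for instance by checking that $\gcd(M,M')=1$ and that $M$ divides $x^{2^8}-x$ (respectively $x^{2^{14}}-x$). Corollary~\ref{C-5-1} then yields a codeword of weight $\delta$, hence $d=\delta$ in both cases.
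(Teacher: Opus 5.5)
\paragraph{Verdict.} Your route is the paper's route, and your reduction is correct, but the proof is incomplete: you never exhibit a polynomial, and that existence is the whole content of the distance claim.

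\paragraph{What you have right.} You take the dimension from Lemma~\ref{L-dim1} and the distance from Corollary~\ref{C-5-1} with $w=1$. Your use of Lemma~\ref{L-5-coeff} is sound. With $a_0=w=1$ you get $b=1$, the relations for odd $k$ vanish, those for even $k$ force $a_{k-1}=a_k$, and $M(1)=1$ automatically. This parametrization also explains the shape of the paper's polynomials, which the paper leaves unmotivated.

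\paragraph{The gap.} The statement holds only if some member of your family is squarefree with every irreducible factor of degree dividing $8$ (resp.\ $14$). Writing ``I expect an irreducible octic'' does not prove this. Until a witness is named, what you have is a proof plan that could in principle come up empty.

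\paragraph{How the paper closes it.} The paper names the witnesses, and both lie in your family:
\begin{itemize}
\item $L_1(x)=x^8+x^7+x^2+x+1$, which is your $a_2=1$, $a_4=a_6=0$;
\item $L_2(x)=x^{14}+x^{13}+x^6+x^5+1$, which is your $a_6=1$ with all other free $a_{2i}=0$.
\end{itemize}
Both are irreducible over $\F_2$, and $x(x-1)L_i'(x)=L_i(x)+1$, so $x(x-1)L_i'(x)\equiv 1 \pmod{L_i(x)}$.

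\paragraph{Checking irreducibility.} For $L_1$ a hand check suffices. It has no roots in $\F_2$. It is not divisible by $x^2+x+1$ or by either irreducible cubic. It is not a product of two irreducible quartics, as comparing it with the six such products shows. For $L_2$, the test you proposed works: one verifies $x^{2^{14}}\equiv x \pmod{L_2}$ together with $\gcd(x^{2^7}-x,L_2)=\gcd(x^{4}-x,L_2)=1$. Insert these two witnesses and your argument becomes the paper's proof.
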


\begin{proof}
\rm{
	The dimensions follow from Lemma \ref{L-dim1}. It remains to prove the minimum distance.
	\begin{itemize}
		\item Consider the polynomial $L_1(x)=x^8 + x^7 + x^2 + x + 1$. It is verified by SageMath that $L_1(x)$ is irreducible over $\F_2$.
		Since $x(x-1)L_1'(x) \equiv 1 \pmod{L_1(x)}$, it follows from Corollary \ref{C-5-1} that the BCH code $\bC_{(2,2^m-1,9,1)}$ has minimum distance $9$ for any $8 \mid m$.
		\item Consider the polynomial $L_2(x)=x^{14} + x^{13} + x^6 + x^5 + 1$. It is verified by SageMath that $L_2(x)$ is irreducible over $\F_2$.
		Since $x(x-1)L_2'(x) \equiv 1 \pmod{L_2(x)}$, it follows from Corollary \ref{C-5-1} that the BCH code $\bC_{(2,2^m-1,15,1)}$ has minimum distance $15$ for any $14 \mid m$.\qed
	\end{itemize}
	}
\end{proof}

\subsection{$p$-ary BCH codes with $\delta=2p+2$}
In this subsection, we prove that the BCH code $\bC_{(p,p^p-1,2p+2,1)}$ has minimum distance $2p+2$ for any odd prime $p$.
The following lemmas will be used.

\begin{lemma}\label{L-5-1}
	Let $p$ be a prime.
	Then the polynomial $H(x)=x^p+x^{p-1}-1 \in \F_p[x]$ is irreducible over $\F_p$.
\end{lemma}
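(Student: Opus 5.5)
Our plan is to reduce the polynomial to an Artin--Schreier polynomial by a change of variables. Since $0$ is not a root ($H(0)=-1$), $H$ is irreducible over $\F_p$ if and only if its reciprocal polynomial is. The reciprocal is
$$
x^pH(1/x)=1+x-x^p=-(x^p-x-1).
$$
So it suffices to show that $x^p-x-1$ is irreducible over $\F_p$. This is the classical Artin--Schreier polynomial.

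We prove irreducibility of $x^p-x-1$ in three steps.

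\begin{enumerate}
\item \emph{No roots in $\F_p$.} For $a\in\F_p$ we have $a^p=a$, so $a^p-a-1=-1\neq 0$.
\item \emph{Roots form an orbit under translation.} Let $\theta$ be a root in some extension field. Then $(\theta+1)^p-(\theta+1)-1=\theta^p-\theta-1=0$. Hence the roots are exactly $\theta,\theta+1,\dots,\theta+p-1$, and they are distinct.
\item \emph{Factors are invariant under translation.} Suppose $x^p-x-1=g(x)h(x)$ over $\F_p$ with $1\le \deg g=k<p$. The roots of $g$ are $\theta+c$ for $c$ in some $k$-subset $S\subseteq\F_p$. The coefficient of $x^{k-1}$ in $g$ is $-\sum_{c\in S}(\theta+c)=-(k\theta+\sum_{c\in S} c)$, and it lies in $\F_p$. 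Since $k\not\equiv 0\pmod p$, this forces $\theta\in\F_p$, contradicting step 1.
\end{enumerate}

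Alternatively, in step 3 one can use the Frobenius map: $\theta^{p}=\theta+1$, so $\theta^{p^j}=\theta+j$, and the orbit of $\theta$ under Frobenius has size exactly $p$. Hence $[\F_p(\theta):\F_p]=p$. Either version shows that $x^p-x-1$ is irreducible, and therefore so is $H$.

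No step is a real obstacle. The only care needed is the trace/coefficient argument in step 3, which uses $p\nmid k$ for $1\le k<p$; this is where primality of $p$ (rather than a general prime power $q$) enters. The case $p=2$ is also covered, since $x^2+x+1$ is irreducible over $\F_2$.
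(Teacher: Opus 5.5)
Your proposal is correct and follows the same route as the paper: pass to the reciprocal polynomial $x^pH(1/x)=-(x^p-x-1)$ and use the irreducibility of the Artin--Schreier polynomial $x^p-x-1$ over $\F_p$. The only difference is that the paper cites that irreducibility as well known, whereas you also prove it. Both of your arguments for it, the translation-invariance coefficient argument and the Frobenius-orbit argument, are sound.
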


\begin{proof}
\rm{
	The reciprocal polynomial of $H(x)$ is
	$$
	H^*(x)=x^p[(\frac{1}{x})^p + (\frac{1}{x})^{p-1}-1]
	=-(x^p-x-1). 
	$$
	It is well known that the Artin--Schreier polynomial $x^p-x-1$ is irreducible over $\F_p$.
	Hence $H^*(x)$ is irreducible over $\F_p$. 
	Since irreducibility is preserved under taking reciprocal polynomials, $H(x)$ is also irreducible over $\F_p$.
	\qed
}
\end{proof}

\begin{lemma}\label{L-5-2}
	Let $p$ be a prime.
	Then the polynomial $M(y)=y^p+y^{p-1}+ \cdots +y-1 \in \F_p[y]$ is irreducible over $\F_p$.
\end{lemma}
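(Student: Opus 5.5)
Observe that $M(y)=y^p+y^{p-1}+\cdots+y-1=\frac{y^{p+1}-y}{y-1}-1$, so
$$(y-1)M(y)=y^{p+1}-y-(y-1)=y^{p+1}-2y+1 .$$
It seems more efficient to relate $M$ to the Artin--Schreier polynomial, as Lemma~\ref{L-5-1} does. We look for a Möbius substitution that turns $M$ into a polynomial already known to be irreducible. The plan is to set $y=\frac{1}{1-z}$, or more generally $y=z/(z-1)$, and compute the numerator of $M$.

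Put $y=1/(1-z)$, equivalently $z=1-1/y$. Then
$$y^{p+1}-2y+1=\frac{1}{(1-z)^{p+1}}-\frac{2}{1-z}+1 .$$
Multiplying by $(1-z)^{p+1}$ and using $(1-z)^p=1-z^p$ in characteristic $p$ gives
$$1-2(1-z)^p+(1-z)^{p+1}=1-2(1-z^p)+(1-z)(1-z^p)=z^{p+1}-z^p-z .$$
This equals $z\,(z^p-z^{p-1}-1)$. The factor $z$ corresponds to the point $y=1$, i.e. to the extraneous factor $y-1$. We also need $y-1=\frac{z}{1-z}$. Hence
$$(1-z)^{p}\,M\!\left(\tfrac{1}{1-z}\right)=z^p-z^{p-1}-1 ,$$
up to a unit sign, which I would double-check by comparing degrees and leading coefficients. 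Substituting $z\mapsto -z$ gives $-(z^p+z^{p-1}+1)$ up to sign. Taking the reciprocal instead gives $-(z^p+z-1)$, that is $(z^{-1})$-type reciprocal $1-z-z^p\cdot(\ldots)$. Concretely, the reciprocal of $z^p-z^{p-1}-1$ is $1-z-z^p=-(z^p+z-1)$. Replacing $z$ by $-z$ then yields $-z^p-z-1$ for odd $p$, which is $-(z^p-(-z)\ldots)$. The cleanest target is an Artin--Schreier form $w^p-w-c$ with $c\in\F_p^*$. Every such polynomial is irreducible over $\F_p$, since its roots are $\theta+\F_p$ and $\theta\notin\F_p$ because $a^p-a=0$ for all $a\in\F_p$.

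So the steps are as follows:
\begin{enumerate}
\item Derive the identity $(y-1)M(y)=y^{p+1}-2y+1$.
\item Apply the Möbius transformation and the Frobenius identity to get $z^p-z^{p-1}-1$.
\item Take the reciprocal, as in Lemma~\ref{L-5-1}, to get $-(z^p+z-1)$.
\item Scale $z\mapsto -z$ (for odd $p$) to get $-(z^p-z+\ldots)$. Concretely, $(-z)^p+(-z)-1=-(z^p+z+1)$, so scaling alone does not suffice.
\end{enumerate}

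This is the main obstacle: matching signs exactly so as to land on $w^p-w-c$. For $p=2$ the polynomial is just checked directly. For odd $p$, if $z^p+z\pm1$ does not reduce to Artin--Schreier form, I would instead apply the Möbius map $y\mapsto y/(y-1)$ (or $y\mapsto 1/(y+1)$). The goal is to choose it so that the linear term comes out with the opposite sign to $z^p$. I would try the few maps that send the roots $\{0,1,\infty\}$ of the degenerate factors appropriately.

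Irreducibility is preserved by Möbius transformations over $\F_p$ for polynomials of degree at least $2$ without roots in $\F_p$. This fact is what makes the final step legitimate. Degree $p$ is preserved because the transformed polynomial has leading coefficient $\pm1$. Alternatively, the check can be made directly: a root $\theta$ of $M$ generates the same field as the corresponding $z=1-1/\theta$, whose degree over $\F_p$ is $p$.
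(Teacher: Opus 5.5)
\textbf{There is a genuine gap: the proposal never reaches an Artin--Schreier polynomial.} Your Steps 3--4 end with an explicitly unresolved ``sign-matching'' obstacle and only a vague plan to try other M\"obius maps.

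The cause is an arithmetic slip in Step 2. Since $1-2(1-z^p)=-1+2z^p$ and $(1-z)(1-z^p)=1-z-z^p+z^{p+1}$, one gets
\[
1-2(1-z)^p+(1-z)^{p+1}=z^{p+1}+z^{p}-z=z\,\bigl(z^p+z^{p-1}-1\bigr),
\]
not $z^{p+1}-z^p-z$. For $p=3$, expanding over $\mathbb{Z}$ gives $z^4-2z^3+2z\equiv z^4+z^3-z \pmod 3$.

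Your polynomial $z^p-z^{p-1}-1$ has the root $z=2\in\F_p$ for every odd $p$, since $2^p=2$ and $2^{p-1}=1$. So it is reducible. No reciprocal, scaling, or other M\"obius change of variables can turn it into an irreducible polynomial of degree $p$, which is why Steps 3--4 could not succeed.

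This should also have flagged the error. $M$ has no roots in $\F_p$: $M(0)=M(1)=-1$, and $M(a)=a-1$ for $a\in\F_p\setminus\{0,1\}$. The point $z=2$ corresponds to $y=-1$, where $M(-1)=-2\neq 0$.

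\textbf{The fix.} With the sign corrected, cancel the factor $z=(1-z)(y-1)$. Your own setup then gives, with no sign ambiguity,
\[
(1-z)^p\,M\!\left(\tfrac{1}{1-z}\right)=z^p+z^{p-1}-1=H(z).
\]
This is exactly the polynomial of Lemma~\ref{L-5-1}, and its reciprocal is $-(z^p-z-1)$. Equivalently, the single substitution $y=w/(w-1)$ gives
\[
(w-1)^p\,M\!\left(\tfrac{w}{w-1}\right)=-(w^p-w-1).
\]
Combined with your correct remark that such changes of variable preserve irreducibility here (a root $\theta$ of $M$ and $1-1/\theta$ generate the same field), this closes the argument.

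It is then essentially the paper's proof. The paper writes $M(y)=\frac{y^{p+1}-1}{y-1}-2$ and uses the translation $y=x+1$ to reach the same $H(x)=x^p+x^{p-1}-1$, then invokes Lemma~\ref{L-5-1}. Your M\"obius map $y=1/(1-z)$ is just a different substitution landing on the same $H$.

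One smaller point: irreducibility of $w^p-w-c$ with $c\in\F_p^*$ does not follow merely from $\theta\notin\F_p$. You also need that Frobenius sends $\theta\mapsto\theta^p=\theta+c$, so the roots $\theta+\F_p$ form a single orbit of size $p$. Alternatively, use the standard trace argument on a putative factor.
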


\begin{proof}
\rm{
	Set $y=x+1$. Since $M(y)=\frac{y^{p+1}-1}{y-1}-2$, we obtain
	$$
	M(x+1)=\frac{(x+1)^{p+1}-1}{(x+1)-1}-2.
	$$
	In characteristic $p$, $(x+1)^{p+1}=x^{p+1}+x^p+x+1$. Therefore,
	$$
	M(x+1)=x^p+x^{p-1}-1.
	$$
	By Lemma~\ref{L-5-1}, the polynomial $M(x+1)$ is irreducible over $\F_p$. 
	Since irreducibility is invariant under translations, $M(y)$ is also irreducible over $\F_p$.
	\qed
}
\end{proof}

\begin{lemma}\label{L-5-3}
	Let $p$ be an odd prime, and define
	$$
	L(x)=x^{2p+1}+x^{2p}+\cdots+x^2-x-1 \in \F_p[x].
	$$
	Then the splitting field of $L(x)$ over $\F_p$ is $\F_{p^p}$.
\end{lemma}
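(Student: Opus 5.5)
The plan is to factor $L(x)$ explicitly in terms of the polynomial $M$ of Lemma~\ref{L-5-2}, and then to locate all of its roots in $\F_{p^p}$.

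\textbf{Step 1: the identity $L(x)=(x+1)M(x^2)$.} Multiplying by $x-1$ and summing the geometric part gives
$$(x-1)L(x)=x^2(x^{2p}-1)-(x^2-1)=x^{2p+2}-2x^2+1 .$$
Put $u=x^2$, so that $x^{2p+2}-2x^2+1=u^{p+1}-2u+1$. From the definition of $M$, in any characteristic,
$$(u-1)M(u)=(u^{p+1}-u)-(u-1)(u+1)+\dots$$
is best computed directly as $(u-1)M(u)=u^{p+1}-u-(u-1)\cdot 1-(u-1)\cdot 0$, i.e.
$$(u-1)M(u)=(u^{p+1}-u)-(u-1)=u^{p+1}-2u+1 .$$
Hence $(x-1)L(x)=(x^2-1)M(x^2)$. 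Cancelling the nonzero factor $x-1$ yields
$$L(x)=(x+1)M(x^2).$$
As a consistency check, $L(-1)=0$ directly from the alternating sum.

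\textbf{Step 2: every root of $L$ lies in $\F_{p^p}$.} The root $-1$ lies in $\F_p$. By Lemma~\ref{L-5-2}, $M$ is irreducible of degree $p$ over $\F_p$, so its $p$ roots $\beta_1,\dots,\beta_p$ all lie in $\F_{p^p}$. The remaining roots of $L$ are the square roots of the $\beta_i$, so we must show each $\beta_i$ is a square in $\F_{p^p}$. Since $p$ is odd, $\beta$ is a square in $\F_{p^p}^\ast$ if and only if $N_{\F_{p^p}/\F_p}(\beta)$ is a square in $\F_p^\ast$. Indeed, $\beta^{(p^p-1)/2}=N(\beta)^{(p-1)/2}$ because $(p^p-1)/2=\frac{p^p-1}{p-1}\cdot\frac{p-1}{2}$. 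The norm of a root of the monic irreducible $M$ equals $(-1)^p M(0)=(-1)^p(-1)=1$, which is a square. Each $\beta_i\neq 0$, so $x^2=\beta_i$ has two roots in $\F_{p^p}$. Thus $L$ splits completely over $\F_{p^p}$.

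\textbf{Step 3: minimality of the splitting field.} If $\gamma$ is a root with $\gamma^2=\beta_1$, then $\F_p(\gamma)\supseteq\F_p(\beta_1)=\F_{p^p}$. Hence the splitting field is exactly $\F_{p^p}$.

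The main obstacle is discovering the factorization $L(x)=(x+1)M(x^2)$; once the $(x-1)$-multiplication trick is found, it is routine. The only delicate point afterwards is the square-class argument via the norm, which uses that $p$ is odd and that $M(0)=-1$ with $\deg M=p$ odd.
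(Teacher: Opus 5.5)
Your proof is correct, and the factorization $L(x)=(x+1)M(x^2)$ and the minimality step match the paper. The paper simply expands $(x+1)(x^{2p}+x^{2p-2}+\cdots+x^2-1)$ directly. Your first display for $(u-1)M(u)$, the one containing $-(u-1)(u+1)$, is a false start and should be deleted.

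The genuine difference is in the key step, showing that every root of $M$ is a square in $\F_{p^p}$.

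\begin{itemize}
\item The paper argues constructively, running back through the substitutions of Lemmas~\ref{L-5-1} and~\ref{L-5-2}. A root of $M$ has the form $1+\beta^{-1}$ with $\beta^p=\beta+1$. Then $1+\beta^{-1}=\beta^{p}/\beta=\beta^{p-1}$, which is visibly a square since $p-1$ is even.
\item You instead use the square-class criterion for odd $q$: $\beta\in\F_{q^n}^\ast$ is a square iff $N_{\F_{q^n}/\F_q}(\beta)$ is a square in $\F_q^\ast$. This follows from Euler's criterion and $\frac{q^n-1}{2}=\frac{q^n-1}{q-1}\cdot\frac{q-1}{2}$. You then read the norm off the constant term, $N(\beta)=(-1)^pM(0)=1$.
\end{itemize}

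Your route is shorter. It uses nothing about $M$ beyond its irreducibility (already supplied by Lemma~\ref{L-5-2}) and the value of $(-1)^pM(0)$, so it does not need the explicit link to $x^p-x-1$.

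The paper's route gives explicit formulas. It exhibits the roots of $L$ other than $-1$ as $\pm\beta^{(p-1)/2}$, with $\beta$ ranging over the roots of $x^p-x-1$.
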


\begin{proof}
\rm{
	Write
	$$
	L(x)=(x+1)\bigl(x^{2p}+x^{2p-2}+\cdots+x^2-1\bigr)
	=(x+1)M(x^2),
	$$
	where
	$
	M(y)=y^p+y^{p-1}+\cdots+y-1.
	$
	By Lemma~\ref{L-5-2}, the polynomial $M(y)$ is irreducible over $\F_p$ of degree $p$. Hence its splitting field is $\F_{p^p}$.
	
	By the change of variable $y=x+1$, the roots of $M(y)$ correspond to the roots of $x^p+x^{p-1}-1$. 
	Equivalently, if $\beta$ satisfies $\beta^p-\beta-1=0$, then
	$$
	\alpha=1+\beta^{-1}.
	$$
	Since $\beta^p=\beta+1$, we obtain
	$$
	\alpha=\frac{\beta+1}{\beta}=\beta^{p-1}.
	$$
	
	Since $p$ is odd, $p-1$ is even, and hence $\alpha=\beta^{p-1}$ is a square in $\F_{p^p}$. 
	Therefore every root of $M(y)$ has two square roots in $\F_{p^p}$, and thus all roots of $M(x^2)$ lie in $\F_{p^p}$.

	Conversely, if $\gamma$ is a root of $M(x^2)$, then $\gamma^2$ is a root of $M(y)$. 
	Hence the splitting field of $L(x)$ contains $\F_{p^p}$. 
	Combined with the previous inclusion, it follows that the splitting field of $L(x)$ is  $\F_{p^p}$. 
	\qed
}
\end{proof}

\begin{theorem}\label{T-5-1}
	Let $p$ be an odd prime.
	The BCH code $\bC_{(p,p^p-1,2p+2,1)}$ has minimum distance $2p+2$ and dimension $p^p-1-2p^2+p$.
\end{theorem}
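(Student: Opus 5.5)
Two things need proving: the minimum distance and the dimension. For the minimum distance we apply Corollary~\ref{C-5-1} with $\delta=2p+2$, so $t=2p+1$, and with $M(x)=L(x)=x^{2p+1}+x^{2p}+\cdots+x^2-x-1\in\F_p[x]$ from Lemma~\ref{L-5-3}. Three conditions must be checked.

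\emph{Roots of $L$.} We need $2p+1$ distinct nonzero roots in $\F_{p^p}$. Lemma~\ref{L-5-3} shows that $L$ splits over $\F_{p^p}$, and $L(0)=-1\neq 0$. For distinctness, write $L(x)=(x+1)M(x^2)$. Since $M$ is irreducible over the perfect field $\F_p$, it is separable. Its roots $\alpha$ are nonzero because $M(0)=-1$, and the map $x\mapsto x^2$ is $2$-to-$1$ on nonzero elements in odd characteristic. Hence $M(x^2)$ has $2p$ distinct roots. The root $-1$ is not among them, because $M(1)=p-1=-1\neq 0$.

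\emph{Condition (iii).} Directly, $L(1)=2p-1-2\equiv -3$. For $p\neq 3$ this lies in $\F_p^*$, so condition (iii) holds. For $p=3$ we get $L(1)=0$, which forces us to look more carefully. Indeed $x=1$ is then a root of $M(x^2)$, since $M(1)=3-1=2\neq 0$ in $\F_3$; so $L(1)=2\cdot M(1)=2\cdot2=4\equiv1$. I therefore recompute $L(1)$ via the factorization: $L(1)=2M(1)=2(p-1)\equiv -2\in\F_p^*$. The direct count was wrong. $L(1)$ has $2p$ positive terms $x^{2p+1},\dots,x^2$ and two negative terms, so $L(1)=2p-2\equiv -2$, which agrees. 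So (iii) holds for every odd $p$.

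\emph{Condition (ii).} Here we use Lemma~\ref{L-5-coeff} with $a_k=1$ for $2\le k\le 2p$ and $a_1=a_0=-1$, looking for some $b\in\F_p$ that satisfies all the relations.
\begin{itemize}
\item The relation for $k=t=2p+1$ reads $a_{2p}+(2p+1+b)=0$, giving $b=-2$ (since $2p+1\equiv 1$).
\item For $3\le k\le 2p$ the relation is $(t-k+1)+(k+b)=t+1+b=2+b=0$, which holds.
\item For $k=2$: $(t-1)a_1+(2+b)a_2=-(2p)+0\equiv 0$, which holds.
\item The linear relation: $ta_0+(b+1)a_1=-1+1=0$, which holds.
\item Then $w=-ba_0=-2$, and $w\in\F_p^*$ as $p$ is odd.
\end{itemize}
So (ii) holds, and Corollary~\ref{C-5-1} gives $d=2p+2$. (One could instead verify the congruence $x(x-1)L'(x)\equiv -2 \pmod{L(x)}$ directly, but the coefficient check is cleaner.)

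\emph{Dimension.} We apply Lemma~\ref{L-dim1} with $n=p^p-1$ and $m=p$. The required range is $\delta\le \lfloor p^{(p+1)/2}\rfloor$, and $2p+2\le p^{(p+1)/2}$ holds for $p\ge3$ since $2\cdot3+2=8\le 9$. The lemma then gives
$$k=n-p\left\lceil (2p+1)(1-1/p)\right\rceil=n-p\left\lceil 2p-1-1/p\right\rceil=n-p(2p-1)=p^p-1-2p^2+p.$$

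The main obstacle is the arithmetic bookkeeping in the coefficient system of Lemma~\ref{L-5-coeff} and in $L(1)$, since it is easy to miscount signs modulo $p$ there. The distinctness of roots also needs care, because Lemma~\ref{L-5-3} only gives the splitting field.
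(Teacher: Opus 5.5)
Your proposal is correct and follows the paper's proof: Corollary~\ref{C-5-1} with $M=L$, the relations of Lemma~\ref{L-5-coeff} with $b=-2$ (so $w=-2$), $L(1)=-2\in\F_p^*$, and the dimension from Lemma~\ref{L-dim1}, where your separability/square-root argument for distinct roots and your explicit coefficient check fill in what the paper leaves to ``direct computation.'' The only fix is to the write-up of condition (iii): delete the false start (the value $L(1)\equiv -3$, the claim $L(1)=0$ for $p=3$, and the garbled remark about $x=1$ being a root of $M(x^2)$), and state simply that $L(1)=2M(1)=2(p-1)\equiv -2$.
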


\begin{proof}
\rm{
	The dimension follows from Lemma \ref{L-dim1}. It remains to prove the minimum distance.
	
	Let $L(x)=x^{2p+1}+x^{2p}+\cdots+x^2-x-1 \in \F_p[x]$. Then $L(0)=-1 \neq 0$. By Lemma \ref{L-5-3}, the polynomial $L(x)$ splits over $\F_{p^p}$ and has $2p+1$ distinct nonzero roots. 
	In particular, $L(x)$ is squarefree.

	Next, we verify that $L(x)$ satisfies the conditions of Corollary \ref{C-5-1}.
	By direct computation, it can be checked that the coefficients of $L(x)$ satisfy the relations in Lemma \ref{L-5-coeff} with $b=-2$, which yields
	$$
	x(x-1)L'(x) \equiv -2 \pmod{L(x)}.
	$$
	Moreover, $L(1)=-2 \in \F_p^*.$
	Therefore Corollary \ref{C-5-1} applies with $q=p$, $m=p$, $\delta=2p+2$, and $w=-2$. Hence the primitive BCH code $\bC_{(p,p^p-1,2p+2,1)}$ has minimum distance $2p+2$.
\qed
}
\end{proof}

\begin{example}
	\rm{
		In the following examples, let $\alpha$ be a primitive element of $\F_{p^p}$.  
		All computations were confirmed by SageMath.
		\begin{itemize}
			\item Let $p=3$. 
			The polynomial $L(x)=x^7 + x^6 + x^5 + x^4 + x^3 + x^2 + 2x + 2$ has roots $\{\alpha,\alpha^3,\alpha^9,\alpha^{13},\alpha^{14},\alpha^{16},\alpha^{22}\}$ in $\F_{3^3}$.
			The corresponding codeword of weight $8$ in the BCH code $\bC_{(3,26,8,1)}$ is
			$c(x)=x^{22}+x^{16}+x^{14}+x^{13}+x^9+x^3+x+1$.
			Then the BCH code $\bC_{(3,26,8,1)}$ has parameters $[26,11,8]_3$.

			\item Let $p=5$. 
			The polynomial $L(x)=x^{11} + x^{10} + x^9 + x^8 + x^7 + x^6 + x^5 + x^4 + x^3 + x^2 + 4x + 4$ has roots $\{\alpha^2,\alpha^{10},\alpha^{50},\alpha^{250},\alpha^{1250},\alpha^{1562},\alpha^{1564},\alpha^{1572},\alpha^{1612},\alpha^{1812},\alpha^{2812}\}$ in $\F_{5^5}$.
			The corresponding codeword of weight $12$ in the BCH code $\bC_{(5,3124,12,1)}$ is
			$x^{2812} + x^{1812} + x^{1612} + x^{1572} + x^{1564} + x^{1562} + x^{1250} + x^{250} + x^{50} + x^{10} + x^2 + 1
			$.
			Then the BCH code $\bC_{(5,3124,12,1)}$ has parameters $[3124, 3079,12]_5$.
		\end{itemize}
	}
\end{example}

\subsection{$q$-ary BCH codes with $\delta=r \frac{q^m-1}{q-1}+1$}

\begin{proposition}\label{P-2}
	Let $1\le r<q-1$ with $r\mid q-1$, and set
	$t=r\frac{q^m-1}{q-1}.$
	Let $L=\mathbb F_{q^m}^\ast$ and $G(x)=x^t.$
	Then the Goppa code $\Gamma_q(L,G)$ has minimum distance
	$d=t+1.$
\end{proposition}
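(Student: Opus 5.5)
The plan is to apply the criterion of Theorem~\ref{T-0}. The Goppa bound in Lemma~\ref{L-goppa} already gives $d\ge t+1$. So it suffices to exhibit $t+1$ pairwise distinct elements of $L=\F_{q^m}^\ast$ for which every ratio $R_j$ in \eqref{e-T-0-0} lies in $\F_q^\ast$. I would reuse the auxiliary polynomial $F_\gamma(x)=(x+\gamma)^{t+1}-(x+\gamma)$ of Lemma~\ref{L-H}, as in the proof of Theorem~\ref{T-4-2}. The only new point is to choose $\gamma$ so that all its roots avoid $0$, since $0\notin L$ here.

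\emph{Step 1 (the $G$-factor).} For $\alpha\in\F_{q^m}^\ast$ we have
\[
G(\alpha)=\alpha^t=\bigl(\alpha^{\frac{q^m-1}{q-1}}\bigr)^r=N_{\F_{q^m}/\F_q}(\alpha)^r\in\F_q^\ast .
\]
Hence $G(\alpha_{i_j})/G(\alpha_{i_{t+1}})\in\F_q^\ast$ for any choice of support elements in $L$. This also confirms that $G$ has no zeros on $L$.

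\emph{Step 2 (choice of $\gamma$).} The roots of $F_\gamma$ are $-\gamma$ and $\zeta-\gamma$ with $\zeta^t=1$. These are all nonzero if and only if $\gamma\neq 0$ and $\gamma\notin\mu_t$, where $\mu_t$ is the group of $t$-th roots of unity, of order $t$ since $t\mid q^m-1$. Because $r<q-1$, we have $t<q^m-1$. Hence $|\{0\}\cup\mu_t|=t+1<q^m$, and such a $\gamma$ exists. This step, ensuring the roots land inside $\F_{q^m}^\ast$, is the main point where the hypothesis $r<q-1$ enters. It is the step I would check most carefully; it is easy, but essential. When $r=q-1$ we would have $\mu_t=\F_{q^m}^\ast$, and the argument breaks down.

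\emph{Step 3 (the $F'$-factor).} Lemma~\ref{L-H} requires only $r\mid q-1$, and it holds for $r=1$ as well. It gives that $F_\gamma$ has $t+1$ distinct roots $\alpha_{i_1},\dots,\alpha_{i_{t+1}}$ and that $F_\gamma'(\alpha_{i_\ell})\in\F_q^\ast$ for all $\ell$. The key ingredient there is $t\equiv r\not\equiv 0\pmod p$. Combining this with Step~1, each $R_j$ is a product of elements of $\F_q^\ast$. Theorem~\ref{T-0} then yields a codeword of weight $t+1$, so $d=t+1$.
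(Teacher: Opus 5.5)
Your proposal is correct and follows essentially the same route as the paper: take $F_\gamma(x)=(x+\gamma)^{t+1}-(x+\gamma)$ from Lemma~\ref{L-H} with $\gamma\neq 0$ and $\gamma^t\neq 1$, use $G(\alpha)=\alpha^t=N_{\F_{q^m}/\F_q}(\alpha)^r\in\F_q^\ast$, and apply Theorem~\ref{T-0}. You also check explicitly that such a $\gamma$ exists, because $r<q-1$ forces $t<q^m-1$; the paper uses this but leaves it implicit.
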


\begin{proof}
	\rm{
		The proof follows by the same argument as in Theorem \ref{T-4-2}, applied to the monomial Goppa polynomial $G(x)=x^t$.
		Choose $\gamma\in\mathbb F_{q^m}^\ast$ with $\gamma^t\ne 1$, and let
		$$
		F_\gamma(x)=(x+\gamma)^{t+1}-(x+\gamma).
		$$
		By Lemma \ref{L-H}, the polynomial $F_\gamma(x)$ has exactly $t+1$ distinct roots in $L=\mathbb F_{q^m}^\ast$, and for each such root $\alpha$ one has $F_\gamma'(\alpha)\in\mathbb F_q^\ast$. Moreover, for every $\alpha\in L$,
		$$
		G(\alpha)=\alpha^t=N_{\mathbb F_{q^m}/\mathbb F_q}(\alpha)^r\in\mathbb F_q^\ast.
		$$
		Hence the condition of Theorem \ref{T-0} is satisfied, and $\Gamma_q(L,G)$ contains a codeword of weight $t+1$. Therefore $d=t+1$. 
		\qed
	}
\end{proof}

 Combining Lemma \ref{L-dim2} and Proposition \ref{P-2}, we obtain the following result.

\begin{theorem}\label{T-BCH}
	Let $t = r \frac{q^m-1}{q-1}$ with $r \mid q-1$ and $1\le r<q-1$ . The narrow-sense primitive BCH code $\bC_{(q, q^m-1, t+1, 1)}$ has minimum distance $d = t+1$ and dimension $(q-r)^m - 1$.
\end{theorem}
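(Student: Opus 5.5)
The plan is to combine the two ingredients already in place: Lemma~\ref{L-dim2} for the dimension and Proposition~\ref{P-2} for the minimum distance. The only work is to transfer the Goppa-code statement to the BCH code.

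First, the dimension. For $1\le r<q-1$ and $t=r\frac{q^m-1}{q-1}$, Lemma~\ref{L-dim2} gives directly that $\bC_{(q,q^m-1,t+1,1)}$ has dimension $(q-r)^m-1$. No further argument is needed here.

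Second, the minimum distance. By Lemma~\ref{L-2-1}, the Goppa code $\Gamma_q(L,x^t)$ is equivalent to $\bC_{(q,q^m-1,t+1,1)}$, where $L=\{1,\alpha,\dots,\alpha^{q^m-2}\}=\F_{q^m}^\ast$ is ordered by powers of a primitive element. The equivalence only reorders coordinates (it reverses rows of the parity-check matrix and identifies $\beta=\alpha^{-1}$), so it preserves weights. Proposition~\ref{P-2} shows that $\Gamma_q(\F_{q^m}^\ast,x^t)$ has minimum distance $t+1$, and therefore so does the BCH code. Equivalently, the Goppa bound (or the BCH bound) gives $d\ge t+1$, and the weight-$(t+1)$ codeword from Proposition~\ref{P-2} shows this is attained.

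The one point I would double-check is a hypothesis inside Proposition~\ref{P-2}. It needs some $\gamma\in\F_{q^m}^\ast$ with $\gamma^t\neq 1$, so that the root $-\gamma$ of $F_\gamma$ is nonzero and, in fact, none of the $t+1$ roots of $F_\gamma$ is $0$. This is exactly where $r<q-1$ enters. If $r=q-1$, then $t=q^m-1$ and every $\gamma\in\F_{q^m}^\ast$ satisfies $\gamma^t=1$. If $r<q-1$, then $t<q^m-1$ and $t\mid q^m-1$, so the $t$-th roots of unity form a proper subgroup and such a $\gamma$ exists.

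We also need that $0$ is not a root of $F_\gamma$, i.e. $\gamma^{t+1}\neq\gamma$. This is the same condition $\gamma^t\ne1$, so all roots lie in $L=\F_{q^m}^\ast$.

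With these checks in place, the theorem follows at once.
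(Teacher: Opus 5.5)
Your proposal is correct and follows the paper's proof: the dimension comes from Lemma~\ref{L-dim2}, and the minimum distance comes from Proposition~\ref{P-2}, transferred to the BCH code through the equivalence of Lemma~\ref{L-2-1}. Your additional check is also correct: because $r<q-1$ the $t$-th roots of unity form a proper subgroup of $\F_{q^m}^\ast$, so some $\gamma\in\F_{q^m}^\ast$ with $\gamma^t\neq 1$ exists, and then every root of $F_\gamma$ lies in $\F_{q^m}^\ast$; this justifies a choice the paper's Proposition~\ref{P-2} makes without comment.
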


\begin{proof}
	\rm{
		By Lemma \ref{L-2-1}, the Goppa code $\Gamma_q(L,x^t)$ with $L=\mathbb F_{q^m}^\ast$ is equivalent to the primitive BCH code $\bC_{(q,q^m-1,t+1,1)}$. Therefore the minimum distance follows from Proposition \ref{P-2}. The dimension follows from Lemma \ref{L-dim2}.
		\qed
	}
\end{proof}

\begin{example}
	\rm{
		Examples of the codes in Theorem \ref{T-BCH} are given in Table \ref{tab-BCH}. 
		All numerical
		examples were produced using SageMath.
	}
\end{example}

\begin{longtable}{|l|l|l|l|l|}
	\caption{\label{tab-BCH} Codes in Theorem \ref{T-BCH}}\\ 
	\hline 
	$q$ & $m$ & $r$ & $\bC_{(q,q^{m}-1,r\frac{q^m-1}{q-1}+1,1)}$ & $d_{best}$ \\  
	\hline
	\endfirsthead
	
	\hline
	$q$ & $m$ & $r$ & $\bC_{(q,q^{m}-1,r\frac{q^m-1}{q-1}+1,1)}$ & $d_{best}$ \\   & $d_{best}$ \\  
	\hline
	\endhead
	
	\hline
	\endfoot
	
	\multirow{4}{*}{3} 
	& 2 & 1 & $[8,3,5]_3$  &  5 \\ \cline{2-5}
	& 3 & 1 & $[26,7,14]_3$  &  14 \\ \cline{2-5}
	& 4 & 1 & $[80,15,41]_3$  &  41 \\ \cline{2-5}
	& 5 & 1 & $[242,31,122]_3$  & 122  \\ \hline
	
	\multirow{3}{*}{4} 
	& 2 & 1 & $[15,8,6]_4$  &  6 \\ \cline{2-5}
	& 3 & 1 & $[63,26,22]_4$  & 22  \\ \cline{2-5}
	& 4 & 1 & $[255,80,86]_4$  & 86  \\ \hline

	\multirow{4}{*}{5} 
	& \multirow{2}{*}{2} & 1 & $[24,15,7]_5$   &  7 \\ \cline{3-5}
	&  & 2 & $[24,8,13]_5$  & 13  \\ \cline{2-5}
	
	& \multirow{2}{*}{3} & 1 & $[124,63,32]_5$   &  32 \\ \cline{3-5}
	&  & 2 & $[124,26,63]_5$  & 63  \\ \hline
	
	
	\multirow{6}{*}{7} 
	& \multirow{3}{*}{2} & 1 & $[48,35,9]_7$   &  9 \\ \cline{3-5}
	&  & 2 & $[48,24,17]_7$  &  17 \\ \cline{3-5}
	&  & 3 & $[48,15,25]_7$  &  25 \\ \cline{2-5}
	
	& \multirow{3}{*}{3} & 1 & $[342,215,58]_7$   & -  \\ \cline{3-5}
	&  & 2 & $[342,124,115]_7$  & -  \\ \cline{3-5}
	&  & 3 & $[342,63,172]_7$  & -  \\ \hline
	
	\multirow{2}{*}{8} 
	& 2 & 1 & $[63,48,10]_8$  & 10  \\ \cline{2-5}
	& 3 & 1 & $[511,342,74]_8$  & -  \\ \hline
	
	\multirow{3}{*}{9} 
	& \multirow{3}{*}{2} & 1 & $[80,63,11]_9$   &  11 \\ \cline{3-5}
	&  & 2 & $[80,48,21]_9$  & 21  \\ \cline{3-5}
	&  & 4 & $[80,24,41]_9$  & 41  \\ \cline{2-5}
	
	
\end{longtable}

\subsection{$q$-ary BCH codes with $\delta=q^t+1$}

We now turn to the family $\bC_{(q,q^m-1,q^t+1,1)}$. In \cite[Theorem  4.4]{2026BCH}, the same family was proved to have minimum distance $q^t+1$ under the condition $pt \mid m$. 
The following result shows that the approach based on Goppa codes only requires a weaker condition $t \mid m$.

\begin{lemma}\label{L-q^t+1}
	Assume that $t\mid m$ and $t<m$. Let $b\in \mathbb F_{q^m}\setminus \mathbb F_{q^t}$ and $\lambda=\frac{b-b^{q^t}}{b^{q^t}-b^{q^{2t}}}.$
	Define
	$$
	F(x)=x^{q^t+1}+(1+\lambda)x^{q^t}+\lambda\in \mathbb F_{q^m}[x].
	$$
	Then $F(x)$ has exactly $q^t+1$ distinct nonzero roots in $\mathbb F_{q^m}$.

\end{lemma}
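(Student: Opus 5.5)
The plan is to write down all $q^t+1$ roots of $F$ explicitly, using a fractional linear parametrization by $\mathbb F_{q^t}\cup\{\infty\}$, in the spirit of the proof of Theorem~\ref{T-4-4}. Write $Q=q^t$. Since $t\mid m$, $\mathbb F_Q\subseteq\mathbb F_{q^m}$ and $x\mapsto x^Q$ is an automorphism of $\mathbb F_{q^m}$. Because $b\notin\mathbb F_Q$, we have $b^Q\neq b$. Applying the automorphism gives $b^{Q^2}\neq b^Q$, so $\lambda$ is well defined and nonzero. Next I rewrite
$$
F(x)=x^Q(x+1)+\lambda(x^Q+1).
$$
So, away from $x^Q=-1$, a root is exactly a solution of $\lambda=-x^Q(x+1)/(x^Q+1)$. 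I will, however, verify the candidate roots directly by substitution.

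For $u\in\mathbb F_Q$ I define
$$
x_u=\frac{u-b}{b^Q-u}.
$$
This is well defined, because $b^Q\notin\mathbb F_Q$ (otherwise $b=(b^Q)^{q^{m-t}}\in\mathbb F_Q$). Since $u^Q=u$, we get $x_u^Q=(u-b^Q)/(b^{Q^2}-u)$. From this,
$$
x_u+1=\frac{b^Q-b}{b^Q-u},\qquad x_u^Q+1=\frac{b^{Q^2}-b^Q}{b^{Q^2}-u}.
$$
A short computation then gives
$$
x_u^Q(x_u+1)=-\frac{b^Q-b}{b^{Q^2}-u}=-\lambda(x_u^Q+1),
$$
so $F(x_u)=0$. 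To these $Q$ roots I add $x=-1$, which corresponds to the point $u=\infty$. Since $Q$ and $Q+1$ have opposite parity, $F(-1)=(-1)^{Q+1}+(1+\lambda)(-1)^Q+\lambda=0$; the case $q$ even is immediate as well.

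Distinctness comes from the Möbius map $u\mapsto (u-b)/(-u+b^Q)$. Its determinant is $b^Q-b\neq 0$, so it is injective on $\mathbb P^1$. It sends $\infty$ to $-1$, so the values $x_u$ ($u\in\mathbb F_Q$) are pairwise distinct and all different from $-1$. They are nonzero because $x_u=0$ would force $u=b\in\mathbb F_Q$. All of them lie in $\mathbb F_{q^m}$ because $b\in\mathbb F_{q^m}$. This produces $Q+1=\deg F$ distinct nonzero roots in $\mathbb F_{q^m}$, so these are exactly the roots of $F$, and each is simple.

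The main obstacle is finding the right Möbius parametrization. A first natural guess, $x=(b^Q-u)/(u-b)$, only works if $b^{Q^2}=b$. The point is to choose the map so that applying the Frobenius $x\mapsto x^Q$ shifts $b\to b^Q\to b^{Q^2}$, which is exactly the pattern that appears in $\lambda$. Once $x_u$ is chosen correctly, everything else is routine verification.
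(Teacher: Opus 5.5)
Your proof is correct and takes essentially the paper's route: the roots are $-1$ together with the M\"obius images $x_u=(u-b)/(b^{q^t}-u)$ for $u\in\mathbb F_{q^t}$, with distinctness from injectivity of that map and exhaustiveness from $\deg F=q^t+1$. The only difference is that the paper obtains this root set from the identity $F(x)=\bigl[(b^{q^t}x+b)(x+1)^{q^t}-(b^{q^t}x+b)^{q^t}(x+1)\bigr]/(b^{q^t}-b^{q^{2t}})$, which gives $F(x)=0\iff \frac{b^{q^t}x+b}{x+1}\in\mathbb F_{q^t}$ for $x\neq -1$, whereas you check $F(x_u)=0$ by direct substitution.
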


\begin{proof}
	\rm{
		Since $b\notin \mathbb F_{q^t}$, we have $b^{q^t}\neq b$, and hence $b^{q^t}-b^{q^{2t}}\neq 0$.
		Moreover,
		$$
		F(x)=\frac{(b^{q^t}x+b)(x+1)^{q^t}-(b^{q^t}x+b)^{q^t}(x+1)}
		{b^{q^t}-b^{q^{2t}}}.
		$$
		A direct substitution shows that $x=-1$ is a root of $F(x)$. For $x\neq -1$,
		$$
		F(x)=0
		\iff
		\left(\frac{b^{q^t}x+b}{x+1}\right)^{q^t}
		=\frac{b^{q^t}x+b}{x+1},
		$$
		which is equivalent to
		$$
		\frac{b^{q^t}x+b}{x+1}\in \mathbb F_{q^t}.
		$$
		Thus, for each $u\in \mathbb F_{q^t}$, the roots with $x\neq -1$ are given by
		$$
		x=\frac{u-b}{b^{q^t}-u}.
		$$
		This is well defined since $b^{q^t}\notin \mathbb F_{q^t}$. Moreover, if
		$$
		\frac{u_1-b}{b^{q^t}-u_1}=\frac{u_2-b}{b^{q^t}-u_2},
		$$
		then $u_1=u_2$, and hence the correspondence is injective. Therefore, $F(x)$ has $q^t$ distinct roots with $x\neq -1$.
		Since $\deg(F)=q^t+1$, it follows that these together with $x=-1$ give all the roots of $F(x)$ in $\mathbb F_{q^m}$.
		Finally, $F(0)=\lambda\neq 0$, so all roots are nonzero.
		\qed}
\end{proof}
\begin{theorem}\label{T-q^t+1}
	Let $q$ be a prime power. For any $t \mid m$ and $t<m$, the narrow-sense primitive BCH code $\bC_{(q, q^m-1, q^t+1, 1)}$ has minimum distance $d = q^t+1$. Moreover, if the dimension formula in Lemma~\ref{L-dim1} applies, then
	$k=q^m-1-mq^{t-1}(q-1).$
\end{theorem}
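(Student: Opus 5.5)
The plan is to apply Theorem~\ref{T-0} in the monomial setting of Lemma~\ref{L-2-1}, using the polynomial $F(x)$ from Lemma~\ref{L-q^t+1} as the locator of a minimum-weight codeword. Put $\delta=q^t+1$, so the Goppa polynomial is $G(x)=x^{q^t}$ of degree $q^t$, and take $L=\mathbb F_{q^m}^\ast$. By Lemma~\ref{L-2-1}, $\Gamma_q(L,G)$ is equivalent to $\bC_{(q,q^m-1,q^t+1,1)}$. The Goppa bound (Lemma~\ref{L-goppa}) gives $d\ge q^t+1$, so it suffices to exhibit $q^t+1$ distinct elements of $L$ satisfying \eqref{e-T-0-0}.

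For these elements I will fix any $b\in\mathbb F_{q^m}\setminus\mathbb F_{q^t}$; such $b$ exists because $t\mid m$ and $t<m$, so $\mathbb F_{q^t}$ is a proper subfield. Then form $\lambda$ and $F(x)=x^{q^t+1}+(1+\lambda)x^{q^t}+\lambda$ as in Lemma~\ref{L-q^t+1}. That lemma gives $q^t+1$ distinct nonzero roots $\alpha_{i_1},\dots,\alpha_{i_{q^t+1}}\in\mathbb F_{q^m}^\ast=L$. Since $F$ is monic of degree $q^t+1$ with these distinct roots, $F(x)=\prod_\ell(x-\alpha_{i_\ell})$, which is exactly the $F$ required in Theorem~\ref{T-0}.

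The key computation is the derivative of $F$ in characteristic $p$. We have $q^t+1\equiv 1\pmod p$, and the derivative of $x^{q^t}$ vanishes, so
$$F'(x)=x^{q^t}=G(x).$$
Hence $G(\alpha_{i_j})/F'(\alpha_{i_j})=1$ for every $j$, and each ratio in \eqref{e-T-0-0} is
$$R_j=\frac{G(\alpha_{i_j})}{G(\alpha_{i_{t+1}})}\cdot\frac{F'(\alpha_{i_{t+1}})}{F'(\alpha_{i_j})}=1\in\mathbb F_q^\ast .$$
The $G$-values are nonzero because the roots are nonzero. This is the same mechanism as Corollary~\ref{C-derivative}, except that the normalising factor is now $1$ since $p\mid q^t$. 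Theorem~\ref{T-0} then yields a codeword of weight $q^t+1$, so $d=q^t+1$.

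For the dimension, whenever Lemma~\ref{L-dim1} applies we get $k=q^m-1-m\lceil q^t(1-1/q)\rceil=q^m-1-mq^{t-1}(q-1)$, since $q^t(1-1/q)=q^{t-1}(q-1)$ is already an integer. I do not expect a real obstacle, because the root count is already supplied by Lemma~\ref{L-q^t+1}. The step needing the most care is bookkeeping: the paper's ``$t$'' here is an exponent, while Theorem~\ref{T-0} uses $t$ for the Goppa degree, which in this application is $q^t$.
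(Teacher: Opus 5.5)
Your proposal is correct and is essentially the paper's own proof. The paper also takes $L=\F_{q^m}^\ast$, $G(x)=x^{q^t}$ and the polynomial $F$ of Lemma~\ref{L-q^t+1}, and observes that $F'(x)=x^{q^t}=G(x)$. It then concludes via Corollary~\ref{C-derivative}, which is exactly your direct check that every ratio $R_j$ equals $1$, since $(q^t+1)^{-1}=1$ in $\F_p$. The BCH statement follows from Lemma~\ref{L-2-1}, and the dimension from Lemma~\ref{L-dim1}.
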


\begin{proof}
\rm{
	The dimension follows from Lemma \ref{L-dim1}.
	Let $\alpha$ be a primitive element of $\F_{q^m}$. Define
	$L=\{1,\alpha,\alpha^2,\dots,\alpha^{q^m-2}\}$ and $G(x)=x^{q^t}.$
	Choose $b\in \F_{q^m}\setminus \F_{q^t}$ and set
	$$
	F(x)=x^{q^t+1}+(1+\lambda)x^{q^t}+\lambda,
	$$
	with $\lambda=\frac{b-b^{q^t}}{b^{q^t}-b^{q^{2t}}}$.
	By Lemma \ref{L-q^t+1}, $F(x)$ has exactly $q^t+1$ distinct nonzero roots in $\F_{q^m}$. Since all roots are nonzero, they lie in $L=\F_{q^m}^\ast$.
	
	Since $F'(x)=x^{q^t}=G(x)$, it follows from Corollary \ref{C-derivative} that the minimum distance of $\Gamma_q(L,G)$ is $q^t+1$.
	By Lemma \ref{L-2-1}, the Goppa code $\Gamma_q(L,G)$ is equivalent to the primitive BCH code $\bC_{(q,q^m-1,q^t+1,1)}$. Hence the BCH code $\bC_{(q,q^m-1,q^t+1,1)}$ also has minimum distance $q^t+1$.
	\qed
}
\end{proof}

\begin{remark}
\rm{
	Under the equivalence in Lemma \ref{L-2-1}, if a codeword of $\Gamma_q(L,G)$ has support $\{i_1,\dots,i_{q^t+1}\}$, then the corresponding BCH codeword has support
	$\{-i_1,-i_2,\dots,-i_{q^t+1}\}\pmod n.$
}
\end{remark}

\begin{example}
	\rm{

		In the following examples, we take $L=\{1,\alpha,\alpha^2,\dots,\alpha^{q^m-2}\}$ and $G(x)=x^{q^t}$, where $\alpha$ is a primitive element of $\F_{q^m}$, and choose $b=\alpha$. All computations were confirmed by SageMath.
		\begin{itemize}
			\item Let $(q,t,m)=(3,1,5)$. The polynomial
			$F(x)=x^4 + (\alpha^4 + 2\alpha^2 + 2)x^3 + \alpha^4 + 2\alpha^2 + 1$ has roots $\{\alpha^{111}, \alpha^{119}, \alpha^{121}, \alpha^{225}\}$ in $\F_{3^5}$.
			This yields a Goppa codeword
			$c(x)=x^{111}+x^{119}+x^{121}+x^{225}$
			of weight $4$. The corresponding codeword in $\bC_{(3,242,4,1)}$ is
			$c'(x)=x^{131}+x^{123}+x^{121}+x^{17}.$
			Hence $\bC_{(3,242,4,1)}$ has parameters $[242,232,4]_3$.
			
			\item Let $(q,t,m)=(4,1,3)$. The polynomial
			$F(x)=x^5 + (\alpha^5 + \alpha^4 + \alpha^2)x^4 + \alpha^5 + \alpha^4 + \alpha^2 + 1$ has roots $\{1, \alpha^{6}, \alpha^{36}, \alpha^{45}, \alpha^{48}\}$.
			This yields a Goppa codeword $c(x)=x^{48} + x^{45} + x^{36} + x^{6} + 1$ of weight $5$.
			The corresponding codeword in $\bC_{(4,63,5,1)}$ is $c'(x)=x^{63} + x^{57} + x^{27} + x^{18} + x^{15}$. Hence $\bC_{(4,63,5,1)}$ has parameters $[63,54,5]_4$.
									
			\item Let $(q,t,m)=(5,1,3)$. The polynomial $F(x)=x^6 + (4\alpha^2 + 3\alpha + 4)x^5 + 4\alpha^2 + 3\alpha + 3$ has roots $\{\alpha^{2}, \alpha^{30}, \alpha^{38}, \alpha^{42}, \alpha^{62}, \alpha^{110}\}$ in $\F_{5^3}$.
			This yields a Goppa codeword $c(x)=x^{110} + x^{62} + x^{42} + x^{38} + x^{30} + x^2$ of weight $6$.
			The corresponding codeword in $\bC_{(5, 124, 6, 1)}$ is
			$c'(x)=x^{122} + x^{94} + x^{86} + x^{82} + x^{62} + x^{14}$.
			Hence $\bC_{(5, 124, 6, 1)}$ has parameters $[124,112,6]_5$.
			
		\end{itemize}
	}
\end{example}

\section{Summary and concluding remarks}\label{s6}

In this paper, we determined the minimum distances of several families of $q$-ary Goppa codes and primitive BCH codes. The main theoretical contribution is a necessary and sufficient criterion for a Goppa code to attain its designed distance $\delta=t+1$. As applications, we obtained three families of Goppa codes with exact minimum distance. In particular, we proved the tightness of the lower bound for wild Goppa codes and extended the family of Goppa codes with $G(x)=x^t+A$ from the binary case to arbitrary odd prime powers.

A further contribution is the specialization of this criterion to the monomial case $G(x)=x^t$, which corresponds to the primitive BCH codes with designed distance $t+1$. Under this specialization, we recovered the locator criterion in \cite[Theorem 3.1]{2026BCH} from the Goppa viewpoint. This led to four families of primitive BCH codes whose minimum distances are equal to the designed distances, including the binary codes $\bC_{(2,2^m-1,\delta,1)}$ with $\delta \in \{9,15\}$, the family $\bC_{(p,p^p-1,2p+2,1)}$, the family $\bC_{(q,q^m-1,r\frac{q^m-1}{q-1}+1,1)}$, and the family $\bC_{(q,q^m-1,q^t+1,1)}$. In particular, for the BCH code $\bC_{(q,q^m-1,q^t+1,1)}$, we improved the condition $pt \mid m$ in \cite[Theorem 4.4]{2026BCH} to the weaker assumption $t\mid m$.

These results show that the Goppa framework provides a natural and effective approach to determining minimum distances. It remains of interest to extend this approach to other classes of alternant codes and to additional families of BCH codes with unknown minimum distances.

\end{document}